\newtheorem{lemma}{Lemma}
\newtheorem{theorem}{\bf Theorem}
\newtheorem{remark}{\bf Remark}
\newtheorem{assumption}{\bf Assumption}
\def\BibTeX{{\rm B\kern-.05em{\sc i\kern-.025em b}\kern-.08em
    T\kern-.1667em\lower.7ex\hbox{E}\kern-.125emX}}
\begin{document}

\title{Semantic Communication in Dynamic Channel Scenarios: Collaborative Optimization of Dual-Pipeline Joint Source-Channel Coding and Personalized Federated Learning}

\author{
\IEEEauthorblockN{\small1\textsuperscript{st} Xingrun Yan}
\IEEEauthorblockA{\small\textit{School of Cyberspace Science and Technology} \\
\textit{Beijing Institute of Technology}\\
Beijing, China \\
xingrunyan@bit.edu.cn}
\and
\IEEEauthorblockN{\small2\textsuperscript{nd} Shiyuan Zuo}
\IEEEauthorblockA{\small\textit{School of Cyberspace Science and Technology} \\
\textit{Beijing Institute of Technology}\\
Beijing, China \\
zuoshiyuan@bit.edu.cn}
\and
\IEEEauthorblockN{\small3\textsuperscript{rd} Yifeng Lyu}
\IEEEauthorblockA{\small\textit{Institute of Computer Application Technology} \\
\textit{Norinco Groups}\\
Beijing, China \\
yifenglyu@hotmail.com}
\and
\IEEEauthorblockN{\small4\textsuperscript{th} Rongfei Fan}
\IEEEauthorblockA{\small\textit{School of Cyberspace Science and Technology} \\
\textit{Beijing Institute of Technology}\\
Beijing, China \\
fanrongfei@bit.edu.cn}
\and
\IEEEauthorblockN{\small5\textsuperscript{th} Han Hu}
\IEEEauthorblockA{\small\textit{School of Information and Electronics} \\
\textit{Beijing Institute of Technology}\\
Beijing, China \\
huhan627@gmail.com}
}

\maketitle

\begin{abstract}
Semantic communication is designed to tackle issues like bandwidth constraints and high latency in communication systems. 
However, in complex network topologies with multiple users, the enormous combinations of client data and channel state information (CSI) pose significant challenges for existing semantic communication architectures. 
To improve the generalization ability of semantic communication models in complex scenarios while meeting the personalized needs of each user in their local environments, we propose a novel personalized federated learning framework with dual-pipeline joint source-channel coding based on channel awareness model (PFL-DPJSCCA).
Within this framework, we present a method that achieves zero optimization gap for non-convex loss functions. 
Experiments conducted under varying SNR distributions validate the outstanding performance of our framework across diverse datasets.
\end{abstract}

\begin{IEEEkeywords}
Joint Source-Channel Coding, Personalized Federated Learning, Channel Awareness, Semantic Communication
\end{IEEEkeywords}
\section{Introduction}
\label{sec:intro}

With the continuous advancement of wireless communication network technologies and the widespread adoption of various data-intensive applications such as AR/VR multimedia, traditional communication systems are facing significant challenges in supporting massive data transmission.
Concurrently, as the developing of the sixth-generation (6G) network, the integration of satellite internet into terrestrial communication systems becomes increasingly feasible. 
However, the satellite-to-ground transmission links are inherently constrained by limitations in bandwidth and latency. 
To address these existing and potential challenges, deep learning-based joint source-channel coding (Deep JSCC) has surfaced as a promising approach, serving as a method to realize Semantic Communication (SC).
Leveraging the rapid evolution of deep learning, this approach replaces independent source and channel coding modules with neural network-based interactive and jointly trained systems, demonstrating substantial potential for development.
Particularly in image transmission tasks, Deep JSCC exhibits comparable performance and superior efficiency when compared with advanced image codec (e.g., JPEG/JPEG2000/BPG) combined with channel coding schemes like Low-Density Parity-Check (LDPC) codes.

In recent years, higher performance demands have driven advancements in JSCC. 
\cite{9414037} proposed integrating a Channel State Information (CSI) estimation module into the JSCC decoder to adaptively process images based on CSI. 
However, this approach requires extensive training and large parameter spaces for unpredictable channels. 
\cite{10094735} enhanced Deep JSCC by embedding Swin Transformer and a shared CSI mechanism in both encoder and decoder, improving channel generalization but causing disconnection between semantic and channel codec to make it challenging to achieve a globally optimal solution. 
\cite{9438648} introduced a attention feature model blended between feature extraction modules, enhancing adaptability to random channels but increasing complexity and latency. 
In contrast, \cite{9878262} incorporated traditional modules like demodulation and quantization into semantic communication, enabling adaptive CSI optimization.

In modern communication scenarios, network topologies typically feature multi-user access, where multiple client nodes connect to a central node, resembling noval network topologies such as edge computing and self-organizing networks. 
However, in practical training and deployment, CSI exhibits continuous and dynamic variations, posing challenges for adaptive joint codec. 
To address the issue of channel generalization, simply discretizing the continuous variable of CSI would inevitably lead to performance degradation.
Moreover, in multi-user environments, the combination of multi-user channel information and complex semantic encoding results in an exponential expansion of the channel state space.

Existing research has proposed various optimization methods for JSCC communication systems, but there is still room for improvement. 
Firstly, \cite{10584091} optimized channel utilization through channel resource reallocation and model-based adaptive channel gain adjustment modules. However, in dynamic CSI environments, the channel perception module needs to work jointly with semantic feature extraction to enhance the model's generalization capability. 
Secondly, \cite{10419624} improved local model performance in Federated Learning (FL) through distillation techniques and mixed training, while \cite{10531097} introduced a joint contrastive learning framework to better capture local data characteristics, improving global model performance and mitigating semantic imbalance. 
However, in multi-user network topologies, these methods do not fully address the challenges of dynamic CSI, making it difficult to balance global generalization capabilities with local personalized requirements when faced with vast combinations of dynamic CSI and transmission information. 
Finally, \cite{10559618} explored FL aggregation methods and proposed loss-weighted aggregation to accelerate convergence, but it lacks theoretical proofs to ensure the algorithm's theoretical validity.

To address these limitations, we propose a Personalized Federated Learning Framework with Dual-Pipeline Joint Source-Channel Coding based on Channel Awareness (PFL-DPJSCCA).
In this framework, only the encoder is equipped with an auxiliary pipeline.
On the auxiliary pipeline, we simultaneously acquire CSI and intermediate processing results from the encoder's main pipeline to generate noise-resistant masks. 
Each stage of the auxiliary pipeline dynamically adjusts the mask by inheriting outputs from the previous stage and perceiving the mapping relationships between adjacent stages of the main pipeline. 
In the encoder, the integration of multi-scale and multi-level mask generation facilitates dense feature adjustment within the main pipeline. 
The collaborative optimization significantly enhances the JSCC model's generalization capability across diverse CSI conditions.
In the decoder, the received information is further denoised through a preprocessing operation with prior knowledge and then provided to the task-oriented functional modules.
Furthermore, we integrate personalized federated learning (PFL), where local CSI perception and processing parameters are updated locally, while task-oriented main pipeline parameters are uploaded for global model updates.

The main contributions of this paper are summarized as follows:
\begin{enumerate}
    \item {\bf Algorithmically}, we propose a PFL integrated with a channel-aware JSCC model, termed PFL-DPJSCCA. 
    By enhancing information interaction between pipelines, PFL-DPJSCCA improves the model's generalization capability for diverse channel conditions. 
    Within the PFL, parameters for CSI perception and processing are iteratively updated locally, while the main link parameters participate in global updates, enabling adaptive optimization for local environmental conditions.

    \item {\bf Theoretically}, we establish convergence guarantees for the proposed framework under non-convex and smooth loss function assumptions. 
    Specifically, with a learning rate $\mathcal{O}(T^{q})$ where $q \in (-1,0)$, PFL-DPJSCCA achieves a convergence rate of $\mathcal{O}(T^{1-q})$. 
    More importantly, we prove that the framework attains zero optimization gap under described conditions. 

    \item {\bf Empirically}, we conduct comprehensive experiments across diverse CSI conditions and communication scenarios. 
    In complex environments, particularly under more adverse conditions, the experimental results demonstrate that PFL-DPJSCCA outperforms baseline methods by improving test accuracy by 0.04\%-38.10\%, while maintaining superior robustness to channel variations.
    The experimental findings also validate the effect of our proposed dual-pipeline architecture and adaptive interaction mask mechanisms.
\end{enumerate}

\section{Preliminary Definition}
The problem setup is systematically formulated by integrating the principles of  FL and SC, establishing a comprehensive foundation for subsequent analysis and implementation.

\subsection{PFL Framework:} 

In a Personalized Federated Learning (PFL) system, $ N $ clients form the set $\mathcal{N} \triangleq \{1, 2, ..., N\}$. Each client $ n $ has a local dataset $\mathcal{D}_n$ with $ D_n $ elements, where $\bm{z}_{n,i} = \{\bm{x_{n,i}}, y_{n,i}\}$ is the $ i $-th data point, $\bm{x_{n,i}} \in \mathcal{R}^{\text{in}}$ is the input, and $ y_{n,i} \in \mathcal{R}^{\text{out}} $ is the output.

PFL trains global parameters $\bm{u}$ and local personalized parameters $\bm{v}_n$, where $\bm{u}$ is updated globally and $\bm{v}_n$ is optimized locally. Our goal is to train these parameters by minimizing the following loss function:
\begin{small}
\begin{align}
F(\bm{u}, \bm{v}) \triangleq \sum_{n=1}^N \gamma_n f_n(\bm{u}, \bm{v}_n),
\end{align}
\end{small}
where $\gamma_n = \frac{D_n}{D_A}$, $ D_A = \sum_{n=1}^N D_n $, and we define 
\begin{small}
\begin{align}
f_n(\bm{u}, \bm{v}_n) = \frac{1}{D_n} \sum_{\bm{z}_{n,i} \in \mathcal{D}_n} f(\bm{u}, \bm{v}_n, \bm{z}_{n,i})  
\end{align}
\end{small}
which is the local loss for client $ n $.

For a random subset $\xi_n \subseteq \mathcal{D}_n$, the stochastic loss is:
\begin{small}
\begin{align}
f(\bm{u}, \bm{v}_n, \xi_n) = \frac{1}{|\xi_n|} \sum_{\bm{z}_{n,i} \in \xi_n} f(\bm{u}, \bm{v}_n, \bm{z}_{n,i}).
\end{align}
\end{small}

So, the training task can be reformulated as the following optimization problem:
\begin{small}
\begin{align}
\min_{\bm{u}, \bm{v}} F(\bm{u}, \bm{v}) = \sum_{n=1}^N \gamma_n f_n(\bm{u}, \bm{v}_n).
\end{align}
\end{small}

In PFL, the parameter server (PS) iteratively interacts with clients to exchange gradients or model parameters. Clients update $\bm{u}$ globally and $\bm{v}_n$ locally. The gradients are $\nabla_{\bm{u}} f_n(\bm{u}, \bm{v}_n)$, $\nabla_{\bm{v}_n} f_n(\bm{u}, \bm{v}_n)$, $\nabla_{\bm{u}} F(\bm{u}, \bm{v})$, and $\nabla_{\bm{v}} F(\bm{u}, \bm{v})$.
PFL balances global and personalized models, adapting better to each client's local data distribution.

\subsection{Semantic Communication Framework:} 
Following the common SC architecture design, our proposed algorithm involves two main functional modules: encoder and decoder. 
The encoder reduces input data into a compact, dense feature matrix, essential for efficient transmission. 
And the decoder then reconstructs these features into meaningful outputs for task execution.
In this study, we focus on image transmission and reconstruction as the standard tasks, which are widely recognized as fundamental benchmarks within the field of semantic communication.
Given that the task is image reconstruction, the input is defined as an image $\mathcal{X}$. 
This image undergoes transformation by the encoder, yielding an output referred to as $Emb_S$, which is transmitted from the sender.
So, we have
\begin{align}
    Emb_S =\textup{Encoder}(\mathcal{X}) 
\end{align}
where we denote $\textup{Encoder}$ represent the encoder module.
As $Emb_S$ propagates through the communication channel, it encounters inevitable transmission challenges including noise interference and channel fading. 
Consequently, the information received at the receiver is defined as $Y$, which encapsulates the effects of these transmission adversities.
Then, it can be explained that 
\begin{align}
    Y=H \cdot Emb_{S}+N
\end{align}
where $H$ denotes the fading coefficient at the receiver and $N$ is the environment noise.
The received signal, once demodulated, enables the recovery of the transmitted information with high fidelity.
We can receive
\begin{align}
    Emb_{R}=\left(H^TH\right)^{-1}H^{T}Y=Emb_S+\hat{N}
\end{align}
where $Emb_{R}$ is the estimated symbol after the transformed operation, and $\hat{N}$ is the modified environment noise.
For image reconstruction tasks, by processing the input through the decoder, we can obtain the reconstructed image $\hat{\mathcal{X}}$.
\begin{align}
    \hat{\mathcal{X}} = \textup{Decoder}(Emb_R)
\end{align}
where $\textup{Decoder}$ refers to the decoder module.
The most commonly used loss function is the Mean Squared Error (MSE) for the learning task, which can be expressed as :
\begin{align}
\label{e: mse loss}
    L_{mse}(\mathcal{X},\hat{\mathcal{X}})=\textup{MSE}(\mathcal{X},\hat{\mathcal{X}})
\end{align}
\section{Framework of PFL Based on DPJSCCA}
\label{sec: framework of cajscc}
In this section, we introduce the framework of PFL with DPJSCCA model. 
The DPJSCCA primarily consists of two key components: semantic encoder and semantic decoder. 
In our design, we have intentionally de-emphasized the presence of channel encoder or decoder to achieve two main objectives: (1) to accommodate diverse requirements of communication coding schemes, and (2) to establish a more robust correlation between channel awareness function and image feature extraction work.

\begin{figure}
    \centering
    \includegraphics[width=\linewidth]{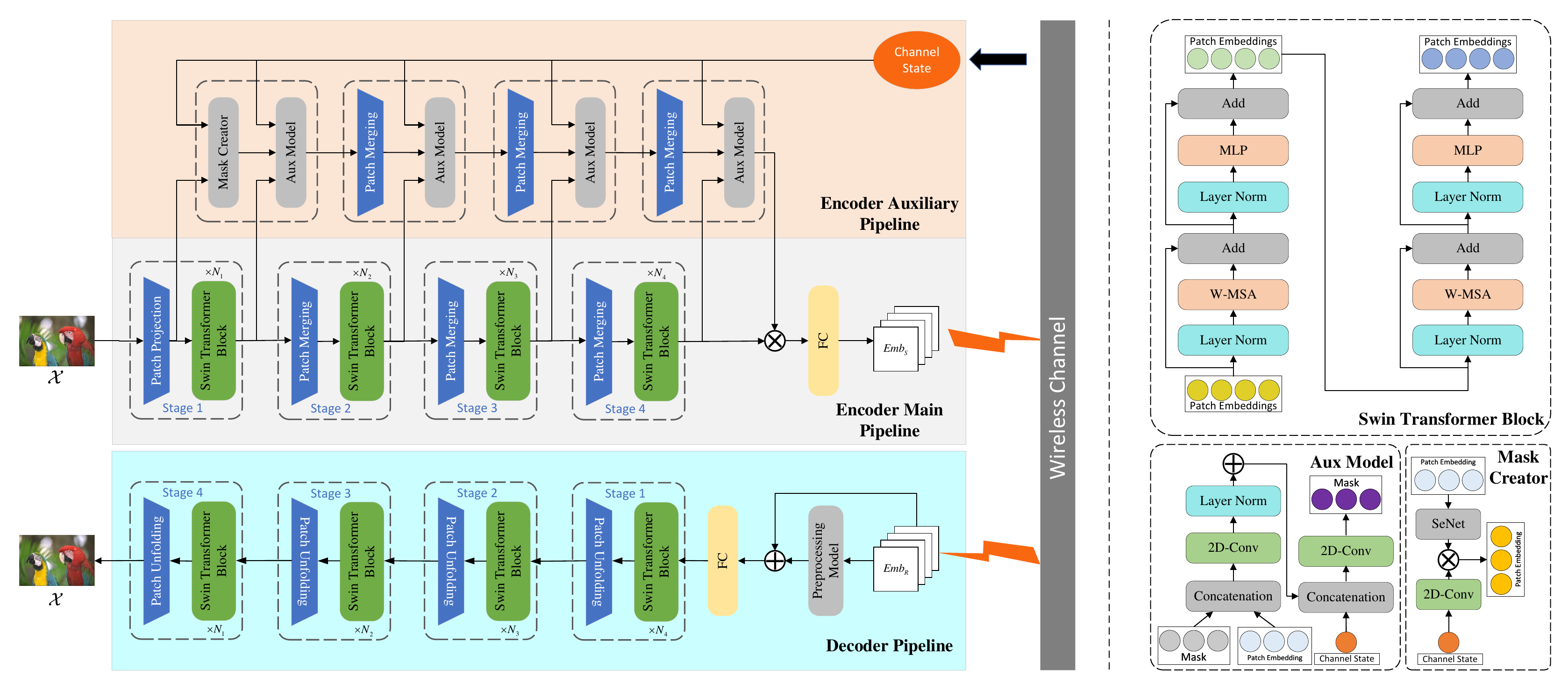}
    \caption{Proposed DPJSCCA model architecture (left) and detailed components: Swin Transformer Blocks, Aux Model, and Mask Creator (right).}
    \label{fig: model pic}
\end{figure}

\subsection{Semantic Encoder:}
We employ Swin Transformer Blocks as the core module of the semantic encoder-decoder in the main pipeline, which is a widely recognized standard design approach \cite{10094735,10531097}.
To further enhance system performance, we introduce an auxiliary pipeline specifically designed to perceive and adapt to dynamic CSI.
As shown in Fig. \ref{fig: model pic}, the input image $\mathcal{X}$ is first processed by a convolution network for dimension reduction, generating a feature map containing dense feature, which serves as the input for the main pipelines. 
In the main pipeline, the input features undergo deep processing through multiple Swin Transformer Blocks, ultimately producing a dense feature representation rich in semantic information.
\begin{align}
    \mathcal{X}' = \textup{SW-MSA}(\mathcal{X})
\end{align}
where SW-MSA denote that the Swin Transformer Block.
\begin{align}
    Mask = \textup{Auxiliary}(\mathcal{X},Snr)
\end{align}
where SW-MSA denote that the Swin Transformer Block.

Considering that the image information undergoes complex transformations through multiple Swin Transformer Blocks in the main pipeline, we establish an information gain channel between the main and auxiliary pipelines to ensure that the generated mask can effectively enhance the image's noise resistance.
The information gain channel allows the auxiliary pipeline to perceive the processing of each Swin Transformer Block.
Notably, the auxiliary pipeline also requires CSI as an additional input, where we use the Signal-to-Noise Ratio (SNR) as a quantitative metric for CSI evaluation.
Finally, the system performs inner product between the mask output from the auxiliary pipeline and the dense features generated by the main pipeline, completing the feature enhancement.

The innovation of this dual-pipeline architecture lies in its channel-aware mechanism through the auxiliary pipeline, enabling the system to adaptively adjust feature representations, thereby improving the model's robustness in complex channel environments. 
Simultaneously, the information interaction between the main and auxiliary pipelines ensures deep coordination between the channel awareness process and feature extraction, achieving superior performance.

\subsection{Semantic Decoder:}
In the semantic decoder, the model is divided into a noise secondary processing module and multiple Swin Transformer Blocks.
Under the perceiving the input from the auxiliary pipeline, we preprocess the input $Emb_R$ through secondary processing to facilitate deeper and more effective denoising, thereby obtaining further refined dense features.
\begin{align}
    Emb_{R}' = \textup{Preprocess}(Emb_R)
\end{align}
where Preprocess refers to the model aimed at preprocessing communication information.
These features $Emb_{R}'$ are then subjected to image reconstruction tasks through reverse multiple Swin Transformer Blocks.
\begin{align}
    \hat{\mathcal{X}} = \textup{SW-MSA}(Emb_R')
\end{align}
The distance between $\mathcal{X}$ and $\hat{\mathcal{X}}$ is measured using MSE, and the model is iteratively updated based on the loss function defined by equation (\ref{e: mse loss}).

\subsection{Personalized Federated Learning}
We aim to combine SC module and the PFL framework to adapt to the communication system where multiple clients correspond to an independent server.
To achieve our objectives, we integrate the SC model into local clients, where it undergoes multiple rounds of local iterative training. 
The locally updated models periodically interact with the PS. 
Upon receiving the model parameters from the clients, the PS updates the global model through an aggregation operation. 
Subsequently, the PS broadcasts the updated global model back to all clients, preparing for the next round of global iteration.
    To enhance the DPJSCCA model's generalization for dynamic CSI, we employ personalized parameter techniques by retaining the semantic encoder's auxiliary pipeline parameters and the preprocessing of semantic decoder, improving adaptability to varying channel conditions.
Below, we provide a full description of PFL-DPJSCCA (see Algorithm \ref{alg: pfl algorithm}), with its crucial steps explained in details as follows.

\begin{algorithm}[tb]
\small
\caption{PFL-DPJSCCA Algorithm} 
\label{alg: pfl algorithm}
\begin{algorithmic}[1]
    \STATE {\bfseries Input:} Initial global model parameter $u^{0}$, local model parameters $\{v_n^{0}\}_{n=1}^N$, number of communication rounds $T$, number of clients $N$, number of local steps $\tau$, step sizes $\eta_v, \eta_u$.
    \STATE {\bfseries Output:} Final global model parameter $u^{T}$ and local model parameters $\{v_n^{T}\}_{n=1}^N$.
    \STATE {\% \% \bf Initialization}
    \STATE Server initializes global model $u^{0}$ and each device $n$ initializes local model $v_n^{0}$.
    \FOR{$t=0$: $T-1$}
        \STATE Server broadcasts the current global model $u^{t}$ to each device in $\mathcal{N}$.
        \STATE {\%\% \bf Local Model Update on Devices}
        \FOR{each device $n \in \mathcal{N}$ in parallel}
            \STATE Initialize local model $\bm{v}_{n, 0}^t = \bm{v}_n^{t}$ and global model $\bm{u}_{n, 0}^t = \bm{u}^{t}$.
            \FOR{$k=0$: $\tau-1$}
                \STATE Update local model: $\bm{v}_{n, k+1}^t = \bm{v}_{n, k}^t - \eta_v {G}_{n,v}\left(\bm{u}_{n, k}^t, \bm{v}_{n, k}^t, \xi_{n,k}^t\right)$.
                \STATE Update global model copy: $\bm{u}_{n, k+1}^t = \bm{u}_{n, k}^t - \eta_u {G}_{n,u}\left(\bm{u}_{n, k}^t, \bm{v}_{n, k}^t, \xi_{n,k}^t\right)$.
            \ENDFOR
            \STATE Set $\bm{v}_n^{t+1} = \bm{v}_{n, \tau}^t$ and $\bm{u}_n^{t+1} = \bm{u}_{n, \tau}^t$.
            \STATE Device $n$ sends updated global model $\bm{u}_n^{t+1}$ back to the server.
        \ENDFOR
        \STATE {\%\% \bf Global Model Aggregation}
        \STATE Server aggregates the received local updates: $\bm{u}^{t+1}=\bm{u}^{t}-\frac{\eta_u}{N}\sum_{n=1}^{N} \sum_{k=0}^{\tau}{G}_{n,u}\left(\bm{u}_{n, k}^t, \bm{v}_{n, k}^t, \xi_{n,k}^t\right)$.
    \ENDFOR
    \STATE {\%\% \bf Output Final Model}
    \STATE Output the final global model $u^{T}$ and local model parameters $\{v_n^{T}\}_{n=1}^N$.
\end{algorithmic}
\end{algorithm}

{\bf Local Updating:}
In the $t$th round of iteration, after receiving the global model parameter $\bm{u}^t$ broadcast by the PS, all clients $\mathcal{N}$ initial their own personalized parameters $\bm{v}_{n}^{t}$.
Then each client $n$, where $n \in \mathcal{N}$, randomly samples subdataset $\xi_n^t$ from their dataset $\mathcal{D}_n$ to calculate their local training gradient $\nabla F_n(\bm{u}^{t}, \bm{v}^{t}_{n}, \xi_n^t)$.
All parameters will jointly function in the local training process and be simultaneously updated through gradient backward.
Let ${G}_n$ represent the vector of the local training gradient uploaded to the PS by client $n$.
Here, ${G}_{n,u}$ and ${G}_{n,v}$ represent the partial derivatives of the local loss function with respect to the global parameters $u$ and the local personalized parameters $v$, respectively.
For personalized parameters, we introduce contrastive learning to enhance the efficiency of core feature extraction from input and improve the stability of dual-channel training.
We use common Information Noise Contrastive Estimation (InfoNCE) to achieve our goals, that is,
\begin{align}
    L_{cl} = -\log\frac{\exp(f(x)^T f(x^+) / \tau)}{\sum_{i=1}^N \exp(f(x)^T f(x_i) / \tau)}
\end{align}
So, the total loss is combined with MSE and InfoNCE, we have
\begin{align}
    L_{total}=L_{mse}+L_{cl}
\end{align}

{\bf Aggregation and Broadcasting:}
In the $t$th round of iteration, after receiving the vectors ${G}_{n,u}\left(\bm{u}^{t}, \bm{v}^{t}_{n}, \xi_n^t\right)$ from all clients $n \in \mathcal{N}$, the PS updates the global model parameter ${u}^{t+1}$ using the learning rate $\eta_{u}$, as given by
\begin{align}
    \bm{u}^{t+1}=\bm{u}^{t}-\frac{\eta_u}{N}\sum_{n=1}^{N} \sum_{k=0}^{\tau}{G}_{n,u}\left(\bm{u}_{n, k}^t, \bm{v}_{n, k}^t, \xi_{n,k}^t\right)
\end{align}
which we assume that the weights of the parameters from each client are equal during aggregation, is introduced to facilitate our discussion of its properties.
Then the PS broadcasts the global model parameter ${u}^{t+1}$ to all clients in preparation for the calculation in the $t+1$th iteration. 

\section{Theoretical Results}
\label{sec: convergence analysis}
In this section, we theoretically analyze the convergence performance of PFL with DPJSCCA. 
Below, we first present the necessary assumptions.
Then analytical results of convergence on our proposed method are presented based on assumptions. 
All the proofs are deferred to Appendix B, which are available in the supplementary materials.

\subsection{Necessary Assumptions}
First we state some general assumptions in the work.
\begin{assumption}
\label{ass: main smooth}
($L$-smooth). The loss function $F_n(\bm{u}_n, \bm{v}_n)$ is continuously differentiable for each client $n=1, \dots, N$.
There exists a constant $L > 0$ such that for $\bm{u}_n, \bm{v}_n$, the following inequality holds:
\begin{small}
\begin{align}
    \nonumber
    F_n\left(\bm{u}_n^{t}, \bm{v}_n\right) 
    &\le F_n\left(\bm{u}_n^{t+1}, \bm{v}_n\right) + \frac{L}{2}\|\bm{u}_n^{t+1} - \bm{u}_n^{t}\|^2 \\
    &+ \left\langle \nabla F_n\left(\bm{u}_n^{t+1}, \bm{v}_n\right), \bm{u}_n^{t+1} - \bm{u}_n^{t} \right\rangle.
\end{align}
\end{small}
Consequently, the following inequality holds:
\begin{small}
\begin{align}
    \|F_n\left(\bm{u}_n^{t+1}, \bm{v}_n\right) - F_n\left(\bm{u}_n^{t}, \bm{v}_n\right)\| &\le L\|\bm{u}_n^{t+1} - \bm{u}_n^{t}\|.
\end{align}
\end{small}
To characterize the smoothness of $F_n$ with respect to both $\bm{u}$ and $\bm{v}$, we define positive constants $L_u, L_v,L_{uv},L_{vu}$ to satisfy the above inequalities.

\end{assumption}

\begin{assumption}
\label{ass: main Bounded Inner Error}
(Bounded Inner Error). For each client \( n = 1, \dots, N \), a subset of dataset \( \mathcal{D}_i \) is randomly selected, denoted as \( \xi_n \). Define
\begin{equation}
    F_{n}\left(\bm{u}_n, \bm{v}_n, \xi_n\right) \triangleq \frac{1}{|\xi_n|} \sum_{\bm{z}_n \in \xi_n} f\left(\bm{u}_n, \bm{v}_n, \bm{z}_n\right).
\end{equation}
The stochastic gradients \( G \) are unbiased, i.e.,
\begin{small}
\begin{align}
    \nabla_{u} F_{n}\left(\bm{u}_n, \bm{v}_n\right) &= \mathbf{E}\left[ G_{n,u}\left(\bm{u}_n, \bm{v}_n, \xi_n\right)\right], \\
    \nabla_{v} F_{n}\left(\bm{u}_n, \bm{v}_n\right) &= \mathbf{E}\left[ G_{n,v}\left(\bm{u}_n, \bm{v}_n, \xi_n\right)\right].
\end{align}
\end{small}
Furthermore, the inner error of stochastic gradients is bounded. There exist constants \( \sigma_{u} \) and \( \sigma_{v} \) such that
\begin{small}
\begin{align}
    \|G_{n,u}\left(\bm{u}_n, \bm{v}_n\right) - \nabla_{u}F_n\left(\bm{u}_n, \bm{v}_n\right)\|^2 &\le \sigma^2_{u}, \\
    \|G_{n,v}\left(\bm{u}_n, \bm{v}_n\right) - \nabla_{v}F_n\left(\bm{u}_n, \bm{v}_n\right)\|^2 &\le \sigma^2_{v}.
\end{align}
\end{small}
\end{assumption}

\begin{assumption}
\label{ass: main Bounded Outer Error}
(Bounded Outer Error). For \( F_n\left(\bm{u}_n, \bm{v}_n\right) \), the variance of stochastic gradients is bounded by constant \( \delta \), i.e.,
\begin{small}
\begin{align}
&\frac{1}{N} \sum_{n=1}^{N}\|\nabla_{u} F_n\left(\bm{u}_n,\bm{v}_n\right) - \nabla_{u} F\left(\bm{u}_n,V\right)\|^2 \le \delta^2
\end{align}
\end{small}
This inequality serves to characterize the degree of detriment to global convergence inflicted by each client as a consequence of the data heterogeneity issue. 
\end{assumption}

\subsection{Convergence Analysis Results}
\begin{theorem}
\label{thm: nonconvex convergence}
With the holding of Assumptions \ref{ass: main smooth} to \ref{ass: Bounded Outer Error}, for the set of constant $T, N, \tau, L_u, L_v, L_{uv},L_{vu},\delta,\sigma_u^2, \sigma_v^2$ is same with the above assumptions and definitions. 
And we define the learning rate for $\bm{u}$ and $\bm{v}$ is that $\eta_{u} = \frac{c}{\tau L_{u}\left(1+\rho^2\right)}$ and $\eta_{v} = \frac{c}{\tau L_{v}}$, where $c \in \left(0,\min{\left\{\frac{1}{\sqrt{6}\max\left\{1,\frac{L_{vu}L_{uv}}{L_{u}L_{v}}\right\}},\frac{\min{\left\{L_{u},L_{v}\right\}}}{6\max\left\{L_{u},L_{v}\right\}+L_{uv}}\right\}}\right]$   there is 

\begin{align}
\nonumber
&\frac{1}{T}\sum_{t = 0}^{T}\left\|\nabla_{u}F(\bm{u}^{t},V^{t})\right\|^2+\frac{1}{NT}\sum_{t = 0}^{T}\sum_{n = 1}^{N}\left\|\nabla_{v}F(\bm{u}_{n}^{t},\bm{v}_{n}^{t})\right\|^2\\
&\leq\frac{F^* - F(\bm{u}^0,V^0)}{T\min\{\lambda_1,\lambda_2\}}+\frac{3(L_u + L_{uv})c^2}{2L_u^2\min\{\lambda_1,\lambda_2\}}\delta^2\\
\nonumber
&\qquad\qquad+\frac{\beta_u}{\min\{\lambda_1,\lambda_2\}}\sigma_{u}^2+\frac{\beta_v}{\min\{\lambda_1,\lambda_2\}}\sigma_{v}^2
\end{align}
which we denote that
\begin{align}
    \beta_u = \frac{(L_u + L_{uv})c^2}{2L_u^2}+\frac{5(e^2 - 1)c^3}{4\tau^2L_u}+\frac{5(e^2 - 1)L_{vu}^2c^2}{4\tau^2L_u^2L_v} \\
    \beta_v = \frac{(L_v + L_{uv})c^2}{2L_v^2}+\frac{5(e^2 - 1)c^3}{4\tau^2L_v}+\frac{5(e^2 - 1)L_{uv}^2c^2}{4\tau^2L_v^2L_u}
\end{align}
in order to simplify the representation of the coefficients.
And to standardize terms on the left side of the inequality, we have defined $\lambda_1 =\frac{c}{4L_u}-\frac{15(e^2 - 1)c^3}{4\tau^2L_u}-\frac{15(e^2 - 1)L_{vu}^2c^2}{4\tau^2L_u^2L_v} $ and $\lambda_2 =\frac{c}{4L_v}-\frac{15(e^2 - 1)c^3}{4\tau^2L_v}-\frac{5(e^2 - 1)L_{uv}^2c^2}{4\tau^2L_v^2L_u} $.
\end{theorem}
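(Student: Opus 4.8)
The plan is to establish the convergence bound through a standard descent-lemma argument adapted to the coupled two-block structure of $(\bm{u}, \bm{v})$, accumulated over the $\tau$ local steps and then over the $T$ global rounds. First I would invoke Assumption~\ref{ass: main smooth} ($L$-smoothness in each block and the cross terms governed by $L_u, L_v, L_{uv}, L_{vu}$) to write one-step descent inequalities for $F$ along the local trajectories $\bm{u}_{n,k}^t, \bm{v}_{n,k}^t$. The key subtlety is that because $\bm{v}_n$ is updated with the \emph{local} gradient while the aggregated $\bm{u}$ couples all clients, the descent on $F(\bm{u}^t, V^t)$ must be written separately for the two blocks: the $\bm{u}$-block picks up the averaged stochastic gradient $\frac{1}{N}\sum_n G_{n,u}$, and the $\bm{v}$-block picks up each client's $G_{n,v}$. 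I would expand $F(\bm{u}^{t+1}, V^{t+1}) - F(\bm{u}^t, V^t)$ into a $\bm{u}$-part and a $\bm{v}$-part plus the cross term, bound the cross term using Young's inequality (absorbing it into the $L_{uv}, L_{vu}$ coefficients), and isolate the negative gradient-norm terms $-\|\nabla_u F\|^2$ and $-\|\nabla_v F\|^2$ that will eventually form the left-hand side.

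Next I would handle the three error sources. Using Assumption~\ref{ass: main Bounded Inner Error} (unbiasedness plus the $\sigma_u^2, \sigma_v^2$ bounds on the stochastic-gradient noise), the expectation of the squared update directions splits into the true-gradient term plus a variance term; this is where the $\beta_u \sigma_u^2$ and $\beta_v \sigma_v^2$ contributions originate. The factor $(e^2-1)$ in $\beta_u, \beta_v$ and in $\lambda_1, \lambda_2$ strongly suggests that the local-drift term $\mathbf{E}\|\bm{u}_{n,k}^t - \bm{u}^t\|^2$ (and similarly for $\bm{v}$) is controlled by a recursive/telescoping argument over $k = 0, \dots, \tau-1$ whose solution is a geometric-type sum bounded using $(1 + x/\tau)^\tau \le e^x$ with the step-size choice $\eta_u = \frac{c}{\tau L_u(1+\rho^2)}$, $\eta_v = \frac{c}{\tau L_v}$; this is the standard mechanism by which the $1/\tau^2$ factors appear in $\beta_u, \beta_v, \lambda_1, \lambda_2$. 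The data-heterogeneity term, via Assumption~\ref{ass: main Bounded Outer Error} with constant $\delta^2$, enters the drift bound as well and surfaces as the $\frac{3(L_u+L_{uv})c^2}{2L_u^2}\delta^2$ term.

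Then I would sum the per-round descent inequality over $t = 0, \dots, T$ and telescope the $F(\bm{u}^{t+1},V^{t+1}) - F(\bm{u}^t,V^t)$ terms, using $F^* $ as a lower bound on $F$, to get $F^* - F(\bm{u}^0, V^0)$ in the numerator. The coefficients $\lambda_1, \lambda_2$ are exactly the residual positive multipliers of $\frac{1}{T}\sum_t \|\nabla_u F\|^2$ and $\frac{1}{NT}\sum_{t,n}\|\nabla_v F\|^2$ after the drift-correction terms ($-\frac{15(e^2-1)c^3}{4\tau^2 L_u}$ etc.) have been subtracted off; I would verify that the stated range of $c$ — in particular $c \le \frac{\min\{L_u,L_v\}}{6\max\{L_u,L_v\}+L_{uv}}$ and $c \le \frac{1}{\sqrt 6 \max\{1, L_{vu}L_{uv}/(L_uL_v)\}}$ — guarantees $\lambda_1, \lambda_2 > 0$, so that dividing by $\min\{\lambda_1,\lambda_2\}$ is legitimate. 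Dividing through by $T\min\{\lambda_1,\lambda_2\}$ and collecting the $\sigma$- and $\delta$-terms yields the claimed inequality.

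The main obstacle I anticipate is the bookkeeping of the coupled drift recursion: bounding $\sum_{k}\mathbf{E}\|\bm{u}_{n,k}^t - \bm{u}^t\|^2$ and $\sum_k \mathbf{E}\|\bm{v}_{n,k}^t - \bm{v}_n^t\|^2$ \emph{simultaneously}, since each block's drift feeds into the other through the cross-smoothness constants $L_{uv}, L_{vu}$. This requires setting up a two-dimensional linear recursion (or a single recursion on the sum of the two drifts), solving it under the step-size constraints, and carefully tracking how the $\max\{1, L_{vu}L_{uv}/(L_uL_v)\}$ factor in the admissible range of $c$ is precisely what keeps this recursion contractive. Everything else — the Young's-inequality splittings, the variance decompositions, and the final telescoping — is mechanical once that coupled drift bound is in hand. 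I would also need to confirm that the extra $(1+\rho^2)$ factor in $\eta_u$ (presumably tied to the structure of the aggregation step in Algorithm~\ref{alg: pfl algorithm}) is consistently carried through, though in the final bound it appears to have been absorbed into the constants.
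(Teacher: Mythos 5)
Your plan follows essentially the same route as the paper's own proof: a two-block descent inequality with the cross terms absorbed via Young's inequality into $L_{uv},L_{vu}$ coefficients, variance and heterogeneity bounds from Assumptions~\ref{ass: main Bounded Inner Error}--\ref{ass: main Bounded Outer Error}, a coupled local-drift recursion solved as a weighted sum of the two drifts (yielding the $(e^2-1)$ factor via the geometric sum and $(1+\tfrac{1}{x})^x\le e$), step-size conditions with the $\max\{1,\tfrac{L_{vu}L_{uv}}{L_uL_v}\}$ factor guaranteeing contractivity and $\lambda_1,\lambda_2>0$, and a final telescoping over $T$ with $F^*$ as the lower bound. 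This matches the paper's Lemmas~\ref{e: lemma1}--\ref{e: lemma7} and the subsequent assembly, so no substantive difference or gap to report.
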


\begin{remark}
\label{remark: convergence remark}
When dealing with more generalized loss functions that are non-convex and smooth, it is obvious that the loss associated with both global and personalized parameters is mainly influenced by two factors: the errors accumulated during the training phase and the variations introduced by data heterogeneity in the global aggregation process.
As training progresses through successive epochs, the gap between the initial loss and the optimal solution diminishes, asymptotically approaching zero. 
Provided that the learning rate is configured according to a constant coefficient scheme, we can express the optimal gap as: 
\begin{align}
    \mathcal{O}\left(\delta^2+\sigma_{u}^2+\sigma^2_{v}\right)
\end{align}

By linking the learning rate $\eta$ to the number of training epochs through $c = \mathcal{O}(T^{q}), q\in(-1,0)$ and put $c$ into $\eta$, we achieve more refined control over the global and personalized errors. Under this condition, we derive the following result:
\begin{align}
    \mathcal{O}\left(T^{1-q}\right), q \in (-1,0) 
\end{align}
\end{remark}

\section{Experiment Results}
In this section, we conduct several experiments to assess the performance of the algorithms discussed in previous sections under various parameter settings. 
We begin to introduce the configurations used to run these experiments.
Then we state the experimental results and evaluate the performance of our algorithm. 
In all experiments, we set the global training rounds for federated learning as \( T = 300 \), the local training rounds as \( \tau = 5 \), and the global initial learning rate as \( \eta = 1 \times 10^{-4} \). The local learning rates, whether \( \eta_u \) or \( \eta_v \), are set to \( \frac{\eta}{\sqrt{t}} \), with a step size of \( \frac{1}{k} \).

\subsection{Implementation Details}
{\bf Model and Dataset:}
The Swin Transformer Block serves as the core module across all algorithms in our implementation. 
While a four-stage architecture is commonly employed for multi-scale feature extraction in encoder or decoder, with multiple Swin Transformer Blocks typically deployed in each stage.
To accommodate hardware constraints, we have adapted the design by utilizing only two Swin Transformer Blocks per stage and operating with a reduced architecture of two selected stages across all algorithms.
The channel design in each algorithm follows a straightforward approach, where we assume the information is transmitted across an Additive White Gaussian Noise (AWGN) channel, SNR following a Gaussian distribution.
We train and evaluate PFL-DPJSCCA on image datasets with different resolutions from 32 × 32 up to 2K. 
Specifically, the CIFAR10 and DIV2K datasets are utilized for training, while the CIFAR10 and Kodak datasets are employed for testing. 
Notably, during training, high-resolution images are randomly cropped into 128 × 128 patches in each iteration.
Additionally, for low resolution images, we perform calculations based on a compression ratio of \( r = \frac{1}{6} \), while for high resolution images, we use a compression ratio of \( r = \frac{1}{16} \).

{\bf Baselines and Metrics:}
To empirically highlight the training performance of our proposed framework, we first compare it in the general settings with WITT \cite{10094735} in Fedavg \cite{pmlr-v54-mcmahan17a}, ADJSCC \cite{9438648} in FedLol \cite{10531097} and ADJSCC in Moon \cite{Li_2021_CVPR} based on CIFAR-10 and DIV2K datasets.
Then in order to evaluate the performance of different SNR distribution, we compare PFL-DPJSCCA in the similar settings with the above algorithms we mentioned. 
Finally, we conduct ablation experiments on PFL-DPJSCCA to meticulously analyze the roles and contributions of different modules within the framework.
Since we formulate image reconstruction as our task, we employ Peak Signal to Noise Ratio (PSNR) and MS-SSIM \cite{1292216} as evaluation metrics.

\subsection{Result Analysis}

\begin{table}[htbp]
\centering
\caption{Performance Comparison of PFL-DPJSCCA and Other Algorithms on DIV2K and CIFAR-10 under Various CSI}
\resizebox{\columnwidth}{!}{%
\begin{tabular}{cccccccccc}
\hline
\multirow{2}{*}{Dataset} & \multirow{2}{*}{\diagbox{ Distribution}{Algorithms}} & \multicolumn{2}{c}{WITT in FedAvg} & \multicolumn{2}{c}{ADJSCC in FedLol} & \multicolumn{2}{c}{ADJSCC in Moon} & \multicolumn{2}{c}{PFL-DPJSCCA} \\ \cline{3-10} 
                         &                   & PSNR            & MS-SSIM          & PSNR            & MS-SSIM            & PSNR           & MS-SSIM           & PSNR          & MS-SSIM         \\ \hline
\multirow{2}{*}{DIV2K}   & Mean=7.5          & 20.087          & 0.827            & 14.596          & 0.785              & 14.594          & 0.784            & \textbf{20.095}         & \textbf{0.829}          \\ \cline{2-10} 
                         & Mean=10.0         & 20.166          & 0.837            & 14.619          & 0.794              & 14.616          & 0.795            & \textbf{20.185}         & \textbf{0.840}          \\ \hline
\multirow{2}{*}{CIFAR10} & Mean=7.5          & 27.969          & 0.955            & 28.306          & 0.960              & 28.440          & 0.960            & \textbf{28.587}         & \textbf{0.961}          \\ \cline{2-10} 
                         & Mean=10.0         & 29.281          & 0.964            & \textbf{29.859}          & \textbf{0.970}              & 29.751          & 0.969            & 29.698         & 0.968          \\ \hline
\end{tabular}%
}
\label{tab: accuracy with baselines}
\end{table}

To evaluate the performance and generalization of PFL-DPJSCCA across datasets in comparison with baseline algorithms under different SNR conditions, we present Table \ref{tab: accuracy with baselines}.
In Table \ref{tab: accuracy with baselines}, it is evident that the baseline algorithms experience a notable performance decline as the channel conditions worsen. 
Under high SNR conditions, these baseline algorithms excel in extracting feature information and achieving better results. 
In contrast, PFL-DPJSCCA demonstrates remarkable stability and consistently superior performance across diverse datasets and various CSI scenarios.
Experimental results reveal that integrating the channel-aware mechanism into the feature extraction process delivers excellent performance for low-compression-rate image reconstruction tasks on low-resolution images. 
However, its effectiveness diminishes as tasks become more complex. Conversely, completely decoupling the channel-aware module from the feature extraction module leads to suboptimal performance on simpler tasks.
\begin{table}[htbp]
\centering
\caption{PFL-DPJSCCA Ablation Study}
\resizebox{\columnwidth}{!}{%
\begin{tabular}{@{}ccc|c|c|@{}}
\toprule
Personalized Federated Learning & Dual-Pipeline Encoder & Decoder Preprocess  & PSNR  &MS-SSIM          \\ \midrule
                                &                       &                     &20.063 &0.825            \\ \midrule
                                &\checkmark             & \checkmark          &20.077 &0.828            \\ \midrule
\checkmark                      &\checkmark             &                     &13.012 &0.714            \\ \midrule
\checkmark                      &\checkmark             & \checkmark          &20.095 &0.829            \\ \bottomrule
\end{tabular}%
}
\label{tab: accuracy with ablation}
\end{table}
Ablation experiments in Table \ref{tab: accuracy with ablation} show that dual-pipeline encoder and decoder preprocess significantly aids task reconstruction, but optimal performance is achieved only when combined with PFL. Both modules individually contribute positively.

\section{Conclusion}
\label{s:conclusion}

In this paper, we propose PFL-DPJSCCA. 
The dual pipelines and communication mechanism improve channel awareness and feature extraction, while the framework balances global collaboration and local personalization for better global model average performance. 
Extensive experiments under complex channel conditions and diverse datasets validate the model's performance, generalization ability, and stability in harsh environments. 
Ablation studies further confirm the positive impact of each designed module.

\bibliography{Reference}
\bibliographystyle{IEEEtran}


\clearpage
\begin{appendices}

\section{Assumptions and Theoretical Results}

To ensure the coherence of our proof process and facilitate the ease of writing, we will first review and summarize the appropriate assumptions that we have proposed. Following this, we will proceed to provide the proof of our theory.

\subsection{Review of Notations and Assumptions}
Firstly, we present some notations to elaborate on the details of the FL framework as follows.
We first consider a FL system with $N$ clients.
For any participating client, say client $n$, it has a local dataset $\mathcal{D}_n$ with $|\mathcal{D}_n|$ elements. 
Each element of $\mathcal{D}_n$ is a ground-true label $\bm{z}_{n} = \{\bm{x}_{n}, y_{n}\}$.
The $\bm{x}_{n} \in \mathcal{R}^{\text{in}}$ is the input vector and $\bm{y}_{n} \in \mathcal{R}^{\text{out}}$ is the output vector.
By utilizing the dataset $\mathcal{D}_n$ for $n=1,...,N$,  we aim to train $d_{u}$-dimension vector $\bm{u}$ and $d_{v}$-dimension vector $\bm{v}$ by minimizing the following loss function
\begin{small}
    \begin{align}
    F(\bm{u}, V) \triangleq \frac{1}{N} \sum_{n=1}^{N} F_n\left(\bm{u}_n,\bm{v}_n\right)=\frac{1}{N} \sum_{n=1}^{N} \frac{1}{D_{N}}\sum_{\bm{z}_{n}\in \mathcal{D}_n} f\left(\bm{u}_n,\bm{v}_n,\bm{z}_{n}\right)
\end{align}
\end{small}
where parameter $V$ is defined as the set of $\bm{v}_{n}, n=1,...,N$, and we denote local loss function is $f\left(\bm{u}_n,\bm{v}_n,\bm{d}_n\right)$.
For any client $n$, the personal parameters $\bm{v}_n$ and the local parameters $u_n$ update simultaneously, with their gradients are evaluated at the same time.

Secondly, we restate a set of general assumptions that are validated and utilized throughout the course of this work.
\begin{assumption}
\label{ass:smooth}
($L$-smooth).The loss function $F_n(\bm{u}_n,\bm{v}_n)$ is continuously differentiable for each client $n=1, ..., N$.
There exists constants $L>0$ such that for $u_n,\bm{v}_n, n=1,...,N$, the following inequality holds:
\begin{small}\begin{align}
    \nonumber
    F_n\left(\bm{u}_n^{t},\bm{v}_n\right) 
    &\le  F_n\left(\bm{u}_n^{t+1},\bm{v}_n\right) + \frac{L}{2}\|\bm{u}_n^{t+1}-\bm{u}_n^{t}\|^2 \\
    & +\langle \nabla F_n\left(\bm{u}_n^{t+1},\bm{v}_n\right),\bm{u}_n^{t+1}-\bm{u}_n^{t}\rangle
\end{align}\end{small}
Then the following inequality relations hold for each $F_n$:
\begin{small}\begin{align}
    \|F_n\left(\bm{u}_n^{t+1},\bm{v}_n\right) - F_n\left(\bm{u}_n^{t},\bm{v}_n\right)\|&\le L\|\bm{u}_n^{t+1}-\bm{u}_n^{t}\|
\end{align}\end{small}
\textup{Since the loss function $F_n$ involves two parameters $u$ and $v$, this necessitates the definition of four constants to fully characterize the smoothness properties of $F_n$ with respect to 
$u$ and $v$.
Let $L_u> 0$, $L_v> 0$, $L_{uv}> 0$, and $L_{vu}> 0$. Given $\nabla_{u}F_n\left(\bm{u}_n,\bm{v}_n\right)$ and $\nabla_{v}F_n\left(\bm{u}_n,\bm{v}_n\right)$, we define these positive constants to represent the following smoothness relations:
}
\begin{enumerate}[label=(\alph*)]
    \item $\nabla_{u}F_n\left(\bm{u}_n,\bm{v}_n\right)$ is $L_u$-smooth with respect to $u$,
    \item $\nabla_{v}F_n\left(\bm{u}_n,\bm{v}_n\right)$ is $L_v$-smooth with respect to $v$,
    \item $\nabla_{u}F_n\left(\bm{u}_n,\bm{v}_n\right)$ is $L_{uv}$-smooth with respect to $v$,
    \item $\nabla_{v}F_n\left(\bm{u}_n,\bm{v}_n\right)$ is $L_{vu}$-smooth with respect to $u$,
\end{enumerate}
\end{assumption}

\begin{assumption}
\label{ass: Bounded Inner Error}
(Bounded Inner Error).Suppose a subset of dataset $\mathcal{D}_{n}$ is selected randomly, which is denote as $\xi_{n}$, define
\begin{equation}
    F_{n}\left(\bm{u}_n,\bm{v}_n,\xi_n\right) \triangleq \frac{1}{|\xi_n|} \sum_{\bm{z}_n\in \xi_n}f\left(\bm{u}_n,\bm{v}_n,\bm{z}_n\right)
\end{equation}
We assume that the stochastic gradients $G$ in this work are unbiased, i.e.,
\begin{small}\begin{align}
    &\nabla_{u} F_{n}\left(\bm{u}_n,\bm{v}_n\right)=\mathbf{E}\left[ G_{n,u}\left(\bm{u}_n,\bm{v}_n,\xi_n\right)\right] \\
    &\nabla_{v} F_{n}\left(\bm{u}_n,\bm{v}_n\right)=\mathbf{E}\left[ G_{n,v}\left(\bm{u}_n,\bm{v}_n,\xi_n\right)\right]
\end{align}\end{small}
Furthermore, for each client $n=1, ..., N$, with $\xi_{n} \in \mathcal{D}_n$, when considering $\nabla F_n\left(\bm{u}_n,\bm{v}_n\right)$ and $G_n\left(\bm{u}_n,\bm{v}_n,\xi_{n}\right)$, the inner error of stochastic gradients is assumed to be bounded. There exist constant $\sigma_{u}$, $\sigma_{v}$, such that
\begin{small}\begin{align} 
\|G_{n,u}\left(\bm{u}_n,\bm{v}_n,\xi_{n}\right)-\nabla_{u}F_n\left(\bm{u}_n,\bm{v}_n\right)\|^2&\le  \sigma^2_{u} \\
\|G_{n,v}\left(\bm{u}_n,\bm{v}_n,\xi_{n}\right)-\nabla_{v}F_n\left(\bm{u}_n,\bm{v}_n\right)\|^2&\le  \sigma^2_{v} 
\end{align}\end{small}

\textup{For each client, this assumption is commonly adopted to bound the inner error resulting from the random selection of data for the current training session \cite{huang2023achieving,pmlr-v202-li23o}. 
During the local training phase, the datasets across clients exhibit non-iid characteristics, which is referred to as data heterogeneity. 
The outer error in the loss function, induced by such data heterogeneity, will be imposed in the next assumption.}
\end{assumption}

\begin{assumption}
\label{ass: Bounded Outer Error}
(Bounded Outer Error).For $F_n\left(\bm{u}_n,\bm{v}_n\right)$, the variance of stochastic gradients is assumed to be bounded by utilizing constants $\rho, \delta$, i.e.,
\begin{small}\begin{align}
&\frac{1}{N} \sum_{n=1}^{N}\|\nabla_{u} F_n\left(\bm{u}_n,\bm{v}_n\right) - \nabla_{u} F\left(\bm{u}_n,V\right)\|^2  \\
\nonumber
&\qquad\qquad\qquad\qquad\qquad\qquad\le \rho^2 \|\nabla_{u} F_n\left(\bm{u}_n,V\right)\|^2 +\delta^2
\end{align}\end{small}
\textup{Generally, the assumption in the special case of $\rho=0$ within this framework has been extensively put forward \cite{pmlr-v119-koloskova20a}. 
This inequality serves to characterize the degree of detriment to global convergence inflicted by each client as a consequence of the data heterogeneity issue. 
Specifically, this beneficial or detrimental effect varies in accordance with the fluctuations of the global loss function gradient within the scope defined by $\rho$}
\end{assumption}

\subsection{Convergence Analysis and Proof of Theorem}

We define $F(\bm{u}^{t}, V^{t})$ is the global loss function, which $V$ represents the concatenation of personalized parameters owned by all clients.
First, we consider $F(\bm{u}^{t + 1}, V^{t+1})-F(\bm{u}^{t}, V^{t})$, it can be decomposed into the following formulas:
\begin{small}\begin{align}
    \nonumber
    &F(\bm{u}^{t + 1}, V^{t+1})-F(\bm{u}^{t}, V^{t}) \\
    \nonumber
    & =\frac{1}{N}\sum_{n = 1}^{N}F_n(\bm{u}^{t + 1},\bm{v}_n^{t+1})-\frac{1}{N}\sum_{i = 1}^{N}F_n(\bm{u}^{t},\bm{v}_n^{t}) \\
    &=\frac{1}{N}\sum_{n = 1}^{N}[F_n(\bm{u}^{t + 1},\bm{v}_n^{t+1})-F_n(\bm{u}^{t},\bm{v}_n^{t})]
\end{align}\end{small}

By combining the formulas above and Lemma \ref{e: lemma1}, we obtain:
\begin{small}\begin{align}
\label{e: four parts}
\nonumber
&\mathbb{E}[F(\bm{u}^{t + 1},V^{t + 1})]-F(\bm{u}^{t},V^{t})\\
\nonumber
&\leq\mathbf{E}\langle\nabla_{u}F(\bm{u}^{t},V^{t}),\bm{u}^{t + 1}-\bm{u}^{t}\rangle+\frac{L_u + L_{uv}}{2}\mathbf{E}\|\bm{u}^{t + 1}-\bm{u}^{t}\|^2 \\
&+\frac{1}{N}\sum_{n = 1}^{N}\mathbf{E}\langle\nabla_{v}F_n(\bm{u}^{t},\bm{v}_n^{t}),\bm{v}_n^{t + 1}-\bm{v}_n^{t}\rangle+\frac{L_v+ L_{uv}}{2N}\sum_{n = 1}^{N}\mathbf{E}\|\bm{v}_n^{t + 1}-\bm{v}_n^{t}\|^2
\end{align}\end{small}

Then, plugging the results of Lemma \ref{e: lemma2} to  \ref{e: lemma5}, we can get the bound:
\begin{small}\begin{align}
    \nonumber
    &\mathbb{E}[F(\bm{u}^{t + 1}, V^{t + 1})] - F(\bm{u}^{t}, V^{t}) \\
    &\leq \left(\frac{3\left(L_{v}+L_{uv}\right)\eta_{v}^{2}\tau^{2}}{2N}-\frac{\eta\gamma}{2N}\right)\sum_{n = 1}^{N}\left\|\nabla_{v}F_{n}(\bm{u}^{t},\bm{v}_{n}^{t})\right\|^{2} \\
    \nonumber
    &+\left(\frac{3\left(L_{u}+L_{uv}\right)\eta_{u}^{2}\tau^{2}(1+\rho^{2})}{2}-\frac{\eta_{u}\tau}{2}\right)\left\|\nabla_{u}F_{n}(\bm{u}^{t},V^{t})\right\|^{2} \\
    \nonumber
    &+\left(\frac{\eta_{u}L_{u}^{2}}{N}+\frac{3\left(L_{u}+L_{uv}\right)L_{u}^{2}\eta_{u}^{2}\tau}{2N}\right)\sum_{n = 1}^{N}\sum_{k = 0}^{\tau - 1}\mathbf{E}\|\bm{u}_{n,k}^{t}-\bm{u}^{t}\|^{2} \\
    \nonumber
    &+\left(\frac{\eta_{u}L_{uv}^{2}}{N}+\frac{3\left(L_{u}+L_{uv}\right)L_{uv}^{2}\eta_{u}^{2}\tau}{2N}\right)\sum_{n = 1}^{N}\sum_{k = 0}^{\tau - 1}\mathbf{E}\|\bm{v}_{n,k}^{t}-\bm{v}_n^{t}\|^{2} \\
    \nonumber
    &+\left(\frac{\eta_{v} L_{vu}^2}{N}+\frac{3\left(L_v + L_{uv}\right)L_{vu}^2\eta_v^2\tau}{2N}\right)\sum_{n = 1}^{N}\sum_{k = 0}^{T - 1}\mathbf{E}\|\bm{u}_{n,k}^t - \bm{u}_n^t\|^2\\
    \nonumber
    &+\left(\frac{\eta_{v} L_{v}^2}{N}+\frac{3\left(L_v + L_{uv}\right)L_{v}^2\eta_v^2\tau}{2N}\right)\sum_{n = 1}^{N}\sum_{k = 0}^{T - 1}\mathbf{E}\|\bm{v}_{n,k}^t - \bm{v}_n^t\|^2\\
    \nonumber
    &+\frac{\left(L_{u}+L_{uv}\right)\eta_{u}^{2}\tau^{2}}{2}\sigma_{u}^{2}+\frac{3\left(L_{u}+L_{uv}\right)\eta_{u}^{2}\tau^{2}}{2}\delta^{2} \\
    \nonumber
    &+\frac{\left(L_{v}+L_{uv}\right)\eta_{v}^{2}\tau^2}{2}\sigma_{v}^{2}
\end{align}\end{small}

For convenience in obtaining subsequent proofs, we assume that $\eta_{u} \le \frac{1}{6\tau\left(\max\left\{L_{u},L_{v}\right\}+L_{uv}\right)\left(1+\rho^2\right)}$ and $\eta_{v} \le \frac{1}{6\tau\left(\max\left\{L_{u},L_{v}\right\}+L_{uv}\right)}$ to simplify the constants in inequality. So we have 
\begin{small}\begin{align}
    \nonumber
    &\mathbb{E}[F(\bm{u}^{t + 1}, V^{t + 1})] - F(\bm{u}^{t}, V^{t}) \\
    \nonumber
    &\leq -\frac{\eta_{u} \tau}{4}\left\|\nabla_{u} F\left(\bm{u}^{t}, V^{t}\right)\right\|^{2}-\frac{\eta_{v} \tau}{4n} \sum_{n = 1}^{N}\left\|\nabla_{v} F_{n}\left(\bm{u}_{n}^{t}, \bm{v}_{n}^{t}\right)\right\|^{2} \\
    \nonumber
    &+\frac{5}{4 N}\left(L_{u}^{2}\eta_{u}+L_{v u}^{2}\eta_{v}\right) \sum_{n = 1}^{N} \sum_{k = 0}^{\tau}\left\|\bm{u}_{n, k}^{t}-\bm{u}^{t}\right\|^{2} \\
    \nonumber
    &+\frac{5}{4 N}\left(L_{v}^{2}\eta_{v}+L_{u v}^{2}\eta_{u}\right) \sum_{n = 1}^{N} \sum_{k = 0}^{\tau}\left\|\bm{v}_{n, k}^{t}-\bm{v}_{n}^{t}\right\|^{2} \\
    \nonumber
    &+\frac{\left(L_{u}+L_{u v}\right) \tau^2\eta_{u}^2}{2} \sigma_{u}^{2}+\frac{\left(L_{v}+L_{u v}\right) \tau^{2}\eta_{v}^2}{2} \sigma_{v}^{2} \\
    &+\frac{3\left(L_{u}+L_{u v}\right) \tau^{2}\eta_{u}^2}{2}\delta^2 
\end{align}\end{small}

Based on the Lemma \ref{e: lemma6}, we can integrate the above inequality to obtain:
\begin{small}
\begin{align}
    \nonumber
    &\mathbb{E}[F(\bm{u}^{t + 1}, V^{t + 1})] - F(\bm{u}^{t}, V^{t}) \\
    \nonumber
    &\le\left(\frac{15(e^2-1)\tau\eta_u^{2}}{4}\left(L_{u}^{2}\eta_{u}+L_{vu}^2\eta_v\right) -\frac{\eta_{u} \tau}{4} \right)\left\|\nabla_{u} F\left(\bm{u}^{t}, V^{t}\right)\right\|^{2} \\
    \nonumber
    &+\left(\frac{15(e^2-1)\tau\eta_{v}^2}{4N}\left(L_{v}^{2}\eta_{v}+L_{uv}^2\eta_u\right)-\frac{\eta_{v} \tau}{4N}\right)\sum_{n = 1}^{N}\left\|\nabla_{v} F_{n}\left(\bm{u}_{n}^{t}, \bm{v}_{n}^{t}\right)\right\|^{2} \\
    \nonumber
    &+\left(\frac{\left(L_{u}+L_{u v}\right) \tau^2\eta_{u}^2}{2} +\frac{5(e^2-1)}{4}\left(L_{u}^{2}\eta_{u}+L_{vu}^2\eta_v\right) \right)\sigma_{u}^{2} \\
    \nonumber
    &+\left(\frac{\left(L_{v}+L_{u v}\right) \tau^{2}\eta_{v}^2}{2} +\frac{5(e^2-1)}{4}\left(L_{v}^{2}\eta_{v}+L_{uv}^2\eta_u\right) \right)\sigma_{v}^{2} \\
    \label{e: gap fo near round}
    &+\frac{3\left(L_{u}+L_{u v}\right) \tau^{2}\eta_{u}^2}{2}\delta^2
\end{align}
\end{small}

Substitute  $\eta_{u} \le \frac{1}{\sqrt{6}\tau L_{u}\max\left\{1,\frac{L_{vu}L_{uv}}{L_{u}L_{v}}\right\}}$ and $\eta_{v} \le \frac{1}{\sqrt{6}\tau L_{v}\max\left\{1,\frac{L_{vu}L_{uv}}{L_{u}L_{v}}\right\}}$ into the inequality (\ref{e: gap fo near round}). 
That is, when a reasonable learning rate is set, we can obtain the following results
\begin{small}
\begin{align}
\nonumber
&\mathbb{E}[F(\bm{u}^{t + 1}, V^{t + 1})] - F(\bm{u}^{t}, V^{t}) \\
\nonumber
&\leq \left(\frac{15(e^2 - 1)c^3}{4\tau^2L_u}+\frac{15(e^2 - 1)L_{vu}^2c^2}{4\tau^2L_uL_v}-\frac{c}{4L_u}\right)\left\|\nabla_u F(\bm{u}^{t}, V^{t})\right\|^2 \\
\nonumber
&+\left(\frac{15(e^2 - 1)c^3}{4\tau^2L_v}+\frac{15(e^2 - 1)L_{uv}^2c^2}{4\tau^2L_uL_v^2}-\frac{c}{4L_v}\right)\frac{1}{N}\sum_{n = 1}^{N}\left\|\nabla_{v_n}F(\bm{u}_n^{t}, \bm{v}_n^{t})\right\|^2\\
\nonumber
&+\left[+\frac{(L_u + L_{uv})c^2}{2L_u}+\frac{5(e^2 - 1)c^3}{4\tau^2L_u}+\frac{5(e^2 - 1)L_{vu}^2c^2}{4\tau^2L_u^2L_v}\right]\sigma^2_{u} \\
\nonumber
&+\left[\frac{(L_v + L_{uv})c^2}{2L_v}+\frac{5(e^2 - 1)c^3}{4\tau^2L_v}+\frac{5(e^2 - 1)L_{uv}^2c^2}{4\tau^2L_v^2L_u}\right]\sigma^2_{v} \\
&+\frac{3(L_u + L_{uv})c^2}{2L_u^2}\delta^2
\end{align}
\end{small}
where we denote $c$ that is the same uncertain range of learning rate for $\bm{u}$ and $\bm{v}$.

Telescoping $T$ rounds of iterations , it follows that
\begin{small}
\begin{align}
\nonumber
&\mathbb{E}[F(\bm{u}^{T}, V^{T})] - F(\bm{u}^{0}, V^{0}) \\
\nonumber
&\leq \left(\frac{15(e^2 - 1)c^3}{4\tau^2L_u}+\frac{15(e^2 - 1)L_{vu}^2c^2}{4\tau^2L_uL_v}-\frac{c}{4L_u}\right)\sum_{t=0}^{T-1}\left\|\nabla_u F(\bm{u}^{t}, V^{t})\right\|^2 \\
\nonumber
&+\left(\frac{15(e^2 - 1)c^3}{4\tau^2L_v}+\frac{15(e^2 - 1)L_{uv}^2c^2}{4\tau^2L_uL_v^2}-\frac{c}{4L_v}\right)\frac{1}{N}\sum_{t=0}^{T-1}\sum_{n = 1}^{N}\left\|\nabla_{v}F(\bm{u}_n^{t}, \bm{v}_n^{t})\right\|^2\\
\nonumber
&+\left[\frac{(L_u + L_{uv})c^2}{2L_u}+\frac{5(e^2 - 1)c^3}{4\tau^2L_u}+\frac{5(e^2 - 1)L_{vu}^2c^2}{4\tau^2L_u^2L_v}\right]T\sigma^2_{u} \\
\nonumber
&+\left[\frac{(L_v + L_{uv})c^2}{2L_v}+\frac{5(e^2 - 1)c^3}{4\tau^2L_v}+\frac{5(e^2 - 1)L_{uv}^2c^2}{4\tau^2L_v^2L_u}\right]T\sigma^2_{v} \\
&+\frac{3(L_u + L_{uv})c^2}{2L_u^2}T\delta^2
\end{align}
\end{small}
We assume that $F^*$ represents the optimal value that the global function can achieve under the conditions of the optimal global shared parameters and the best-match local personalized parameters.
For $F^*$, we have the following properties:
\begin{small}\begin{align}
    F^* \le F(\bm{u},\bm{v}), & \forall u\in\mathcal{R}^{d_{u}^{in}} , \forall v \in \mathcal{R}^{d_{v}^{in}}\\
    &\nabla F^* =0
\end{align}\end{small}

So we have
\begin{small}\begin{align}
    F^*-F(\bm{u}^{0}, V^{0}) \le F(\bm{u}^{T}, V^{T})-F(\bm{u}^{0}, V^{0})
\end{align}\end{small}

When we assume that the coefficients in $\left\|\nabla_u F(\bm{u}^{t}, V^{t})\right\|^2$ and $\left\|\nabla_{v}F(\bm{u}_n^{t}, \bm{v}_n^{t})\right\|^2$ are negative, by moving the terms to the left hand of the inequality and also relocating the terms originally on the right hand of the inequality to the left, we obtain:
\begin{small}
\begin{align}
\nonumber
&\left[\frac{c}{4L_u}-\frac{15(e^2 - 1)c^3}{4\tau^2L_u}-\frac{15(e^2 - 1)L_{vu}^2c^2}{4\tau^2L_u^2L_v}\right]\frac{1}{T}\sum_{t = 0}^{T}\left\|\nabla_u F(\bm{u}^{t}, V^{t})\right\|^2 \\
\nonumber
&+\left[\frac{c}{4L_v}-\frac{15(e^2 - 1)c^3}{4\tau^2L_v}-\frac{5(e^2 - 1)L_{uv}^2c^2}{4\tau^2L_v^2L_u}\right]\frac{1}{NT}\sum_{t = 0}^{T}\sum_{n = 1}^{N}\left\|\nabla_{v_n} F(\bm{u}_n^{t}, \bm{v}_n^{t})\right\|^2\\
\nonumber
&\leq\frac{F^* - F(\bm{u}^0, V^0)}{T}+\frac{3(L_u+L_{uv})c^2}{2L_u^2}\delta^2\\
\nonumber
&+\left[\frac{(L_u + L_{uv})c^2}{2L_u^2}+\frac{5(e^2 - 1)c^3}{4\tau^2L_u}+\frac{5(e^2 - 1)L_{vu}^2c^2}{4\tau^2L_u^2L_v}\right]\sigma^2_{u} \\
&+\left[\frac{(L_v + L_{uv})c^2}{2L_v^2}+\frac{5(e^2 - 1)c^3}{4\tau^2L_v}+\frac{5(e^2 - 1)L_{uv}^2c^2}{4\tau^2L_v^2L_u}\right]\sigma^2_{v}
\end{align}
\end{small}

By further simplifying the inequality, we can obtain the final result when we define $\lambda_1 =\frac{c}{4L_u}-\frac{15(e^2 - 1)c^3}{4\tau^2L_u}-\frac{15(e^2 - 1)L_{vu}^2c^2}{4\tau^2L_u^2L_v} $ and $\lambda_2 =\frac{c}{4L_v}-\frac{15(e^2 - 1)c^3}{4\tau^2L_v}-\frac{5(e^2 - 1)L_{uv}^2c^2}{4\tau^2L_v^2L_u} $
\begin{small}
\begin{align}
\nonumber
&\frac{1}{T}\sum_{t = 0}^{T}\left\|\nabla_{u}F(\bm{u}^{t},V^{t})\right\|^2+\frac{1}{NT}\sum_{t = 0}^{T}\sum_{n = 1}^{N}\left\|\nabla_{v}F(\bm{u}_{n}^{t},\bm{v}_{n}^{t})\right\|^2\\
\nonumber
&\leq\frac{\lambda_1}{\min\{\lambda_1,\lambda_2\}}\frac{1}{T}\sum_{t = 0}^{T}\left\|\nabla_{u}F(u^{t},V^{t})\right\|^2 \\
\nonumber
&\qquad\qquad+\frac{\lambda_2}{\min\{\lambda_1,\lambda_2\}}\frac{1}{NT}\sum_{t = 0}^{T}\sum_{n = 1}^{N}\left\|\nabla_{v_n}F(u_{n}^{t},v_{n}^{t})\right\|^2\\
&\leq\frac{F^* - F(\bm{u}^0,V^0)}{T\min\{\lambda_1,\lambda_2\}}+\frac{3(L_u + L_{uv})c^2}{2L_u^2\min\{\lambda_1,\lambda_2\}}\delta^2\\
\nonumber
&\qquad\qquad+\frac{\left[\frac{(L_u + L_{uv})c^2}{2L_u^2}+\frac{5(e^2 - 1)c^3}{4\tau^2L_u}+\frac{5(e^2 - 1)L_{vu}^2c^2}{4\tau^2L_u^2L_v}\right]}{\min\{\lambda_1,\lambda_2\}}\sigma_{u}^2 \\
\nonumber
&\qquad\qquad+\frac{\left[\frac{(L_v + L_{uv})c^2}{2L_v^2}+\frac{5(e^2 - 1)c^3}{4\tau^2L_v}+\frac{5(e^2 - 1)L_{uv}^2c^2}{4\tau^2L_v^2L_u}\right]}{\min\{\lambda_1,\lambda_2\}}\sigma_{v}^2
\end{align}
\end{small}


\begin{lemma}
\label{e: lemma1}
According to our assumptions, we bound the loss function gap for any client from round $t$ to round $t+1$:
\begin{small}\begin{align}
&F_n(\bm{u}^{t + 1},\bm{v}_n^{t+1})-F_n(\bm{u}^{t},\bm{v}_n^{t}) \\
\nonumber
&\leq\langle\nabla_{u}F_n(\bm{u}^{t},\bm{v}_n^{t}),\bm{u}^{t + 1}-\bm{u}^{t}\rangle+\frac{L_u + L_{uv}}{2}\|\bm{u}^{t + 1}-\bm{u}^{t}\|^2 \\
\nonumber
&+\langle\nabla_{v}F_n(\bm{u}^{t},\bm{v}_n^{t}),\bm{v}_n^{t + 1}-\bm{v}_n^{t}\rangle+\frac{L_v+ L_{uv}}{2}\|\bm{v}_n^{t + 1}-\bm{v}_n^{t}\|^2
\end{align}\end{small}
\end{lemma}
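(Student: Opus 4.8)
The plan is to bound the one-round increment of $F_n$ by inserting the intermediate point $(\bm{u}^{t},\bm{v}_n^{t+1})$ and writing
\begin{small}
\begin{align}
\nonumber
F_n(\bm{u}^{t+1},\bm{v}_n^{t+1})-F_n(\bm{u}^{t},\bm{v}_n^{t})
&=\underbrace{F_n(\bm{u}^{t+1},\bm{v}_n^{t+1})-F_n(\bm{u}^{t},\bm{v}_n^{t+1})}_{\bm u\text{ moves}}\\
\nonumber
&\quad+\underbrace{F_n(\bm{u}^{t},\bm{v}_n^{t+1})-F_n(\bm{u}^{t},\bm{v}_n^{t})}_{\bm v_n\text{ moves}}.
\end{align}
\end{small}
Each bracket will then be controlled by the descent inequality that follows from $L$-smoothness, namely $F(\bm y)\le F(\bm x)+\langle\nabla F(\bm x),\bm y-\bm x\rangle+\tfrac{L}{2}\|\bm y-\bm x\|^{2}$, applied block-wise: Assumption~\ref{ass:smooth}(a) (constant $L_u$) for the first bracket, where the second argument is frozen at $\bm v_n^{t+1}$, and Assumption~\ref{ass:smooth}(b) (constant $L_v$) for the second bracket, where the first argument is frozen at $\bm u^{t}$.

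The second bracket is already in the desired form: $L_v$-smoothness in $\bm v$ gives $\langle\nabla_{v}F_n(\bm{u}^{t},\bm{v}_n^{t}),\bm{v}_n^{t+1}-\bm{v}_n^{t}\rangle+\tfrac{L_v}{2}\|\bm{v}_n^{t+1}-\bm{v}_n^{t}\|^{2}$, with the gradient already evaluated at $(\bm{u}^{t},\bm{v}_n^{t})$. For the first bracket, $L_u$-smoothness in $\bm u$ yields $\langle\nabla_{u}F_n(\bm{u}^{t},\bm{v}_n^{t+1}),\bm{u}^{t+1}-\bm{u}^{t}\rangle+\tfrac{L_u}{2}\|\bm{u}^{t+1}-\bm{u}^{t}\|^{2}$, but the gradient is anchored at $(\bm{u}^{t},\bm{v}_n^{t+1})$ rather than at $(\bm{u}^{t},\bm{v}_n^{t})$ as the lemma requires. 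I would therefore add and subtract $\nabla_{u}F_n(\bm{u}^{t},\bm{v}_n^{t})$ inside the inner product, bound the residual term by Cauchy--Schwarz, and use the mixed-smoothness estimate $\|\nabla_{u}F_n(\bm{u}^{t},\bm{v}_n^{t+1})-\nabla_{u}F_n(\bm{u}^{t},\bm{v}_n^{t})\|\le L_{uv}\|\bm{v}_n^{t+1}-\bm{v}_n^{t}\|$ from Assumption~\ref{ass:smooth}(c), followed by the balanced Young/AM--GM step $L_{uv}\|\bm{v}_n^{t+1}-\bm{v}_n^{t}\|\,\|\bm{u}^{t+1}-\bm{u}^{t}\|\le\tfrac{L_{uv}}{2}\|\bm{u}^{t+1}-\bm{u}^{t}\|^{2}+\tfrac{L_{uv}}{2}\|\bm{v}_n^{t+1}-\bm{v}_n^{t}\|^{2}$. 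Summing the two brackets, the $\tfrac{L_u}{2}$ and $\tfrac{L_{uv}}{2}\|\Delta\bm u\|^2$ contributions merge into $\tfrac{L_u+L_{uv}}{2}\|\bm{u}^{t+1}-\bm{u}^{t}\|^{2}$, and the $\tfrac{L_v}{2}$ and $\tfrac{L_{uv}}{2}\|\Delta\bm v_n\|^2$ contributions merge into $\tfrac{L_v+L_{uv}}{2}\|\bm{v}_n^{t+1}-\bm{v}_n^{t}\|^{2}$, which is exactly the claimed inequality.

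The only delicate point is the bookkeeping of the cross term produced by moving the evaluation point of $\nabla_u F_n$ from $\bm v_n^{t+1}$ back to $\bm v_n^{t}$: the Young split on $L_{uv}\|\Delta\bm u\|\,\|\Delta\bm v_n\|$ must be taken with weight $\tfrac12$ on each square so that the resulting coefficients are precisely $\tfrac{L_u+L_{uv}}{2}$ and $\tfrac{L_v+L_{uv}}{2}$ and not something looser. Note also that the decomposition should be routed through $(\bm{u}^{t},\bm{v}_n^{t+1})$ rather than through $(\bm{u}^{t+1},\bm{v}_n^{t})$, since the latter ordering would pass the cross term through Assumption~\ref{ass:smooth}(d) and yield an $L_{vu}$ in place of one of the $L_{uv}$'s; the statement as written uses $L_{uv}$ as the mixed-smoothness constant on both sides (equivalently, one may replace $L_{uv}$ throughout by $\max\{L_{uv},L_{vu}\}$).
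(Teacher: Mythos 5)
Your proposal is correct and follows essentially the same route as the paper: the paper also applies the block-wise descent inequality sequentially in $\bm{u}$ (with the gradient anchored at $(\bm{u}^{t},\bm{v}_n^{t+1})$) and in $\bm{v}$, then adds and subtracts $\nabla_{u}F_n(\bm{u}^{t},\bm{v}_n^{t})$, applies Cauchy--Schwarz, the $L_{uv}$ mixed-smoothness bound, and Young's inequality with weight $\tfrac12$ to merge the coefficients into $\tfrac{L_u+L_{uv}}{2}$ and $\tfrac{L_v+L_{uv}}{2}$. Your remark about routing through $(\bm{u}^{t},\bm{v}_n^{t+1})$ to obtain $L_{uv}$ rather than $L_{vu}$ matches the paper's implicit choice.
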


\begin{proof}
Based on the assumptions, we sequentially apply the L-smooth inequality to $u$ and $v$ to facilitate the completion of the proof.
\begin{subequations}
\begin{small}\begin{align}
\nonumber
&F_n(\bm{u}^{t + 1},\bm{v}_n^{t+1})-F_n(\bm{u}^{t},\bm{v}_n^{t}) \\
&\leq\langle\nabla_{u}F_n(\bm{u}^{t},\bm{v}_n^{t+1}),\bm{u}^{t + 1}-\bm{u}^{t}\rangle +\frac{L_u}{2}\|\bm{u}^{t + 1}-\bm{u}^{t}\|^2\\
\nonumber
&\qquad+\langle\nabla_{v}F_n(\bm{u}^{t},\bm{v}_n^{t}),\bm{v}_n^{t + 1}-\bm{v}_n^{t}\rangle+\frac{L_v}{2}\|\bm{v}_n^{t + 1}-\bm{v}_n^{t}\|^2\\
&\leq\langle\nabla_{u}F_n(\bm{u}^{t},\bm{v}_n^{t + 1})-\nabla_{u}F_n(\bm{u}^{t},\bm{v}_i^{t}),\bm{u}^{t + 1}-\bm{u}^{t}\rangle \\
\nonumber
&\qquad+\langle\nabla_{u}F_n(\bm{u}^{t},\bm{v}_n^{t}),\bm{u}^{t + 1}-\bm{u}^{t}\rangle +\frac{L_u}{2}\|\bm{u}^{t + 1}-\bm{u}^{t}\|^2\\
\nonumber
&\qquad+\langle\nabla_{v}F_n(\bm{u}^{t},\bm{v}_n^{t}),\bm{v}_n^{t + 1}-\bm{v}_n^{t}\rangle+\frac{L_v}{2}\|\bm{v}_n^{t + 1}-\bm{v}_n^{t}\|^2\\
\label{lemma1-1}
&\leq\|\nabla_{u}F_n(\bm{u}^{t},\bm{v}_n^{t + 1})-\nabla_{u}F_n(\bm{u}^{t},\bm{v}_n^{t})\|\|\bm{u}^{t + 1}-\bm{u}^{t}\| \\
\nonumber
&\qquad+\langle\nabla_{u}F_n(\bm{u}^{t},\bm{v}_n^{t}),\bm{u}^{t + 1}-\bm{u}^{t}\rangle+\frac{L_u}{2}\|\bm{u}^{t + 1}-\bm{u}^{t}\|^2 \\
\nonumber
&\qquad+\langle\nabla_{v}F_n(\bm{u}^{t},\bm{v}_n^{t}),\bm{v}_n^{t + 1}-\bm{v}_n^{t}\rangle+\frac{L_v}{2}\|\bm{v}_n^{t + 1}-\bm{v}_n^{t}\|^2\\
&\leq L_{uv}\|\bm{v}_n^{t + 1}-\bm{v}_n^{t}\|\|\bm{u}^{t + 1}-\bm{u}^{t}\| \\
\nonumber
&\qquad+\langle\nabla_{u}F_n(\bm{u}^{t},\bm{v}_n^{t}),\bm{u}^{t + 1}-\bm{u}^{t}\rangle+\frac{L_u}{2}\|\bm{u}^{t + 1}-\bm{u}^{t}\|^2 \\
\nonumber
&\qquad+\langle\nabla_{v}F_n(\bm{u}^{t},\bm{v}_n^{t}),\bm{v}_n^{t + 1}-\bm{v}_n^{t}\rangle+\frac{L_v}{2}\|\bm{v}_n^{t + 1}-\bm{v}_n^{t}\|^2\\
\label{lemma1-2}
&\leq\frac{L_{uv}}{2}\|\bm{v}_n^{t + 1}-\bm{v}_n^{t}\|^2+\frac{L_{uv}}{2}\|\bm{u}^{t + 1}-\bm{u}^{t}\|^2|\\
\nonumber
&\qquad+\langle\nabla_{u}F_n(\bm{u}^{t},\bm{v}_n^{t}),\bm{u}^{t + 1}-\bm{u}^{t}\rangle+\frac{L_u}{2}\|\bm{u}^{t + 1}-\bm{u}^{t}\|^2 \\
\nonumber
&\qquad+\langle\nabla_{v}F_n(\bm{u}^{t},\bm{v}_n^{t}),\bm{v}_n^{t + 1}-\bm{v}_n^{t}\rangle+\frac{L_v}{2}\|\bm{v}_n^{t + 1}-\bm{v}_n^{t}\|^2\\
\nonumber
&=\langle\nabla_{u}F_n(\bm{u}^{t},\bm{v}_n^{t}),\bm{u}^{t + 1}-\bm{u}^{t}\rangle+\frac{L_u + L_{uv}}{2}\|\bm{u}^{t + 1}-\bm{u}^{t}\|^2 \\
\nonumber
&+\langle\nabla_{v}F_n(\bm{u}^{t},\bm{v}_n^{t}),\bm{v}_n^{t + 1}-\bm{v}_n^{t}\rangle+\frac{L_v+ L_{uv}}{2}\|\bm{v}_n^{t + 1}-\bm{v}_n^{t}\|^2
\end{align}\end{small}
\end{subequations}
For (\ref{lemma1-1}) and (\ref{lemma1-2}), we employ the triangle inequality, $\bm{a}^T\bm{b} \leq \|\bm{a}\|\|\bm{b}\| \leq \frac{\|\bm{a}\|^2}{2}+\frac{\|\bm{b}\|^2}{2}$, thereby concluding the proof.
\end{proof}

\begin{lemma}
\label{e: lemma2}
We define the number of local epochs run by all clients is $\tau$. So we can bound $\langle\nabla_{u}F_n(\bm{u}^{t},V^{t}),\bm{u}^{t + 1}-\bm{u}^{t}\rangle$:
\begin{small}\begin{align}
\nonumber
&\mathbf{E}\left\{\langle\nabla_{u}F_n(\bm{u}^{t},V^{t}),\bm{u}^{t + 1}-\bm{u}^{t}\rangle\right\} \\
\nonumber
&\leq-\frac{\eta_{u}\tau}{2}\|\nabla_{u}F_n(\bm{u}^{t},V^{t})\|^{2} \\
&\qquad+\frac{\eta_{u}}{n}\sum_{n = 1}^{N}\sum_{k = 0}^{\tau - 1}\mathbf{E}\left[L_{u}^{2}\|\bm{u}_{n,k}^{t}-\bm{u}^{t}\|^{2}+L_{uv}^{2}\|\bm{v}_{n,k}^{t}-\bm{v}_n^{t}\|^{2}\right]
\end{align}\end{small}
\end{lemma}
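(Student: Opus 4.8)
The plan is to unfold the server aggregation rule, pass to expectations through unbiasedness of the stochastic gradients, and then split off a clean negative multiple of $\|\nabla_u F(\bm u^t,V^t)\|^2$ with a polarization identity, leaving a residual that smoothness converts into the stated client-drift terms. \textbf{Step 1.} Substitute the global update of Algorithm~\ref{alg: pfl algorithm}, namely $\bm u^{t+1}-\bm u^t=-\tfrac{\eta_u}{N}\sum_{n=1}^N\sum_{k=0}^{\tau-1}G_{n,u}(\bm u_{n,k}^t,\bm v_{n,k}^t,\xi_{n,k}^t)$, into the inner product. Conditioning on the history that determines $\bm u_{n,k}^t,\bm v_{n,k}^t$ (before $\xi_{n,k}^t$ is drawn), Assumption~\ref{ass: Bounded Inner Error} (unbiasedness) together with the tower property replaces each $G_{n,u}$ by $\nabla_u F_n(\bm u_{n,k}^t,\bm v_{n,k}^t)$; since $\nabla_u F(\bm u^t,V^t)$ is determined at the start of round $t$, it factors out of the expectation. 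This reduces the quantity to $-\tfrac{\eta_u}{N}\sum_{n,k}\langle \nabla_u F(\bm u^t,V^t),\,\nabla_u F_n(\bm u_{n,k}^t,\bm v_{n,k}^t)\rangle=-\eta_u\sum_{k=0}^{\tau-1}\langle \nabla_u F(\bm u^t,V^t),\,\bar b_k\rangle$, where $\bar b_k\triangleq\tfrac1N\sum_n\nabla_u F_n(\bm u_{n,k}^t,\bm v_{n,k}^t)$.

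\textbf{Step 2.} Apply the identity $-\langle \bm a,\bm b\rangle=\tfrac12\|\bm a-\bm b\|^2-\tfrac12\|\bm a\|^2-\tfrac12\|\bm b\|^2$ with $\bm a=\nabla_u F(\bm u^t,V^t)$ and $\bm b=\bar b_k$, sum over the $\tau$ local steps, and drop the nonpositive term $-\tfrac{\eta_u}{2}\sum_k\|\bar b_k\|^2$; this already yields $-\tfrac{\eta_u\tau}{2}\|\nabla_u F(\bm u^t,V^t)\|^2+\tfrac{\eta_u}{2}\sum_{k=0}^{\tau-1}\|\nabla_u F(\bm u^t,V^t)-\bar b_k\|^2$, i.e. the target descent term plus a residual. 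For the residual, use $\nabla_u F(\bm u^t,V^t)=\tfrac1N\sum_n\nabla_u F_n(\bm u^t,\bm v_n^t)$ and convexity of $\|\cdot\|^2$ to get $\|\nabla_u F(\bm u^t,V^t)-\bar b_k\|^2\le\tfrac1N\sum_n\|\nabla_u F_n(\bm u^t,\bm v_n^t)-\nabla_u F_n(\bm u_{n,k}^t,\bm v_{n,k}^t)\|^2$; then insert the intermediate point $\nabla_u F_n(\bm u_{n,k}^t,\bm v_n^t)$, apply the triangle inequality with $(a+b)^2\le 2a^2+2b^2$, and invoke parts (a) and (c) of Assumption~\ref{ass:smooth} ($\nabla_u F_n$ is $L_u$-smooth in $\bm u$ and $L_{uv}$-smooth in $\bm v$) to bound each summand by $2L_u^2\|\bm u_{n,k}^t-\bm u^t\|^2+2L_{uv}^2\|\bm v_{n,k}^t-\bm v_n^t\|^2$. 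The factors $\tfrac12$ and $2$ cancel, leaving exactly the coefficient $\eta_u/N$ in the claim; taking total expectation finishes it.

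\textbf{Main obstacle.} I expect the only genuinely delicate point to be organizational rather than technical: the $\nabla_u F_n$'s must be averaged over $n$ into $\bar b_k$ \emph{before} the squared norm is split, so that the surviving comparison is between $\nabla_u F_n$ at the round-start iterate and at the drifted local iterate — comparing the aggregated $\nabla_u F$ against a single $\nabla_u F_n$ would spuriously inject a data-heterogeneity ($\delta^2$) term at this stage, which belongs to the later bounds rather than here. Everything else — correct conditioning when invoking unbiasedness, and tracking the constants $2$, $\tfrac12$, $\tau$, $1/N$ through Steps~1--2 — is routine. The client-drift terms $\|\bm u_{n,k}^t-\bm u^t\|^2$ and $\|\bm v_{n,k}^t-\bm v_n^t\|^2$ are deliberately left unbounded here and are controlled later in Lemma~\ref{e: lemma6}.
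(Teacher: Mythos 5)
Your proposal is correct and follows essentially the same route as the paper's own proof: unfold the aggregation rule and use unbiasedness, split off $-\tfrac{\eta_u\tau}{2}\|\nabla_u F(\bm{u}^t,V^t)\|^2$ via the polarization bound while dropping the nonpositive term, then apply Jensen's inequality over clients and the $L_u$/$L_{uv}$ smoothness of $\nabla_u F_n$ to obtain the drift terms with coefficient $\eta_u/N$. Your handling of the conditioning for unbiasedness and of the heterogeneity pitfall is, if anything, stated more carefully than in the paper.
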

\begin{proof}
Based on our assumptions, we take the expectation of the unbiased gradient for any client, thereby obtaining the following:
\begin{small}\begin{align}
\nonumber
&\mathbf{E}\left\{\langle\nabla_{u}F_n(\bm{u}^{t},V^{t}),\bm{u}^{t + 1}-\bm{u}^{t}\rangle\right\} \\
\nonumber
&=\eta_{u}\sum_{k = 0}^{\tau - 1}\mathbf{E}\left\{\langle\nabla_{u}F_n(\bm{u}^{t},V^{t}),\frac{1}{N}\sum_{n = 1}^{N}\nabla_{u}F_{n}(\bm{u}_{n,k}^{t},\bm{v}_{n,k}^{t})\rangle\right\} \\
\nonumber
&\leq-\frac{\eta_{u}\tau}{2}\|\nabla_{u}F_n(\bm{u}^{t},V^{t})\|^{2} \\
\nonumber
&\qquad+\frac{\eta_{u}}{2}\sum_{k = 0}^{\tau - 1}\mathbf{E}\|\frac{1}{N}\sum_{n = 1}^{N}\nabla_{u}F_{n}(\bm{u}_{n,k}^{t},\bm{v}_{n,k}^{t})-\nabla_{u}F_n(\bm{u}^{t},V^{t})\|^{2} \\
\nonumber
&\leq-\frac{\eta_{u}\tau}{2}\|\nabla_{u}F_n(\bm{u}^{t},V^{t})\|^{2} \\
\nonumber
&\qquad+\frac{\eta_{u}}{2N}\sum_{k = 0}^{\tau - 1}\sum_{n = 1}^{N}\mathbf{E}\|\nabla_{u}F_{n}(\bm{u}_{n,k}^{t},\bm{v}_{n,k}^{t})-\nabla_{u}F_n(\bm{u}^{t},\bm{v}_n^{t})\|^{2} \\
&\leq-\frac{\eta_{u}\tau}{2}\|\nabla_{u}F_n(\bm{u}^{t},V^{t})\|^{2} \\
\nonumber
&\qquad+\frac{\eta_{u}}{N}\sum_{n = 1}^{N}\sum_{k = 0}^{\tau - 1}\mathbf{E}\left[L_{u}^{2}\|\bm{u}_{n,k}^{t}-\bm{u}^{t}\|^{2}+L_{uv}^{2}\|\bm{v}_{n,k}^{t}-\bm{v}_n^{t}\|^{2}\right]
\end{align}\end{small}
\end{proof}

\begin{lemma}
\label{e: lemma3}
Following a similar approach as in Lemma \ref{e: lemma2}, bounds can be established for $\frac{1}{N}\sum_{n = 1}^{N}\mathbf{E}\langle\nabla_{v}F_n(\bm{u}^t,\bm{v}_n^t),\bm{v}_n^{t + 1}-\bm{v}_n^t\rangle$ as well.
\begin{small}\begin{align}
\nonumber
&\frac{1}{N}\sum_{n = 1}^{N}\mathbf{E}\langle\nabla_{v}F_n(\bm{u}^t,\bm{v}_n^t),\bm{v}_n^{t + 1}-\bm{v}_n^t\rangle \\
&\leq-\frac{\eta_{v}\gamma}{2N}\sum_{n = 1}^{N}\|\nabla_{v}F_n(\bm{u}^t,\bm{v}_n^t)\|^2 \\
\nonumber
&\qquad+\frac{\eta_{v}}{N}\sum_{n = 1}^{N}\sum_{k = 0}^{T - 1}\mathbf{E}\left[L_{vu}^{2}\|\bm{u}_{n,k}^t - \bm{u}_n^t\|^2+L_{v}^2\|\bm{v}_{n,k}^t - \bm{v}_n^t\|^2\right]
\end{align}\end{small}
\end{lemma}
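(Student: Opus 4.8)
\textbf{Proof proposal for Lemma~\ref{e: lemma3}.}
The plan is to mirror the argument already carried out for $\bm{u}$ in Lemma~\ref{e: lemma2}, but now tracking the personalized direction $\bm{v}_n$ for each client $n$. First I would unroll the local update rule: since $\bm{v}_n^{t+1}-\bm{v}_n^t = \bm{v}_{n,\tau}^t - \bm{v}_{n,0}^t = -\eta_v\sum_{k=0}^{\tau-1} G_{n,v}(\bm{u}_{n,k}^t,\bm{v}_{n,k}^t,\xi_{n,k}^t)$, taking the conditional expectation and invoking the unbiasedness from Assumption~\ref{ass: Bounded Inner Error} turns the stochastic gradients into the true gradients $\nabla_v F_n(\bm{u}_{n,k}^t,\bm{v}_{n,k}^t)$. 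This gives
\begin{small}
\begin{align}
\nonumber
&\frac{1}{N}\sum_{n=1}^N \mathbf{E}\langle \nabla_v F_n(\bm{u}^t,\bm{v}_n^t),\bm{v}_n^{t+1}-\bm{v}_n^t\rangle \\
\nonumber
&= -\frac{\eta_v}{N}\sum_{n=1}^N\sum_{k=0}^{\tau-1}\mathbf{E}\langle \nabla_v F_n(\bm{u}^t,\bm{v}_n^t),\nabla_v F_n(\bm{u}_{n,k}^t,\bm{v}_{n,k}^t)\rangle.
\end{align}
\end{small}

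Next I would apply the polarization identity $\langle \bm{a},\bm{b}\rangle = \tfrac12\|\bm{a}\|^2 + \tfrac12\|\bm{b}\|^2 - \tfrac12\|\bm{a}-\bm{b}\|^2$ with $\bm{a}=\nabla_v F_n(\bm{u}^t,\bm{v}_n^t)$ and $\bm{b}=\nabla_v F_n(\bm{u}_{n,k}^t,\bm{v}_{n,k}^t)$, keeping the $-\tfrac12\|\bm{a}\|^2$ term (which produces the negative $-\tfrac{\eta_v\tau}{2N}\sum_n\|\nabla_v F_n(\bm{u}^t,\bm{v}_n^t)\|^2$ contribution after summing over $k$), discarding the harmless $-\tfrac12\|\bm{b}\|^2$ term, and controlling the cross term $\tfrac12\|\bm{a}-\bm{b}\|^2$ by the smoothness of $\nabla_v F_n$. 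Here I would use parts (b) and (d) of Assumption~\ref{ass:smooth}: $\nabla_v F_n$ is $L_v$-smooth in $\bm{v}$ and $L_{vu}$-smooth in $\bm{u}$, so by adding and subtracting $\nabla_v F_n(\bm{u}^t,\bm{v}_{n,k}^t)$ and using $\|\bm{x}+\bm{y}\|^2\le 2\|\bm{x}\|^2+2\|\bm{y}\|^2$,
\begin{small}
\begin{align}
\nonumber
\|\nabla_v F_n(\bm{u}_{n,k}^t,\bm{v}_{n,k}^t)-\nabla_v F_n(\bm{u}^t,\bm{v}_n^t)\|^2 \le 2L_{vu}^2\|\bm{u}_{n,k}^t-\bm{u}^t\|^2 + 2L_v^2\|\bm{v}_{n,k}^t-\bm{v}_n^t\|^2.
\end{align}
\end{small}
Summing over $k=0,\dots,\tau-1$ and $n=1,\dots,N$ and collecting constants yields exactly the stated bound (with the drift terms $\|\bm{u}_{n,k}^t-\bm{u}_n^t\|^2$; note that in the symmetric setting $\bm{u}_n^t=\bm{u}^t$ at the start of the round, so the two forms agree), where the $\gamma$ in the first term on the right-hand side plays the role of the $\tau$ that comes out of the summation.

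I do not expect any serious obstacle here: the lemma is the verbatim analogue of Lemma~\ref{e: lemma2} with the roles of $\bm{u}$ and $\bm{v}$ swapped and the relevant smoothness constants $(L_v,L_{vu})$ substituted for $(L_u,L_{uv})$. The only point requiring a little care is bookkeeping the conditioning — the inner product must be taken before the randomness of the minibatches $\xi_{n,k}^t$ is averaged out, and one must be consistent about whether the reference point is $\bm{u}^t$ (the broadcast global model) or $\bm{u}_n^t$ (the client's copy); since the algorithm initializes $\bm{u}_{n,0}^t=\bm{u}^t$ these coincide at the top of the round, and the drift quantities are defined accordingly. A secondary subtlety is that the factor of $2$ from the crude $\|\bm{x}+\bm{y}\|^2\le 2\|\bm{x}\|^2+2\|\bm{y}\|^2$ split must be absorbed so that the coefficient on the drift terms comes out as $\eta_v$ rather than $\tfrac{\eta_v}{2}$; this is handled by not halving the cross term when applying Young's inequality in the polarization step, exactly as was done in Lemma~\ref{e: lemma2}.
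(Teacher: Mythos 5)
Your proposal is correct and follows essentially the same route as the paper: unroll the $\tau$ local steps, use unbiasedness to replace stochastic gradients by true gradients, split the inner product into a negative $-\tfrac{\eta_v\tau}{2}\|\nabla_v F_n(\bm{u}^t,\bm{v}_n^t)\|^2$ term plus $\tfrac{\eta_v}{2}$ times the squared gradient difference (the paper does this by adding/subtracting and Young's inequality, you via the polarization identity — the same bound), and then control the difference by adding/subtracting $\nabla_v F_n(\bm{u}^t,\bm{v}_{n,k}^t)$ and invoking the $L_{vu}$- and $L_v$-smoothness, so that the factor $2$ combines with $\tfrac{\eta_v}{2}$ to give the stated $\eta_v$ coefficient. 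Your observations that the $\gamma$ in the statement plays the role of $\tau$ and that $\bm{u}_n^t=\bm{u}^t$ at the start of the round also match how the paper's own proof treats these points.
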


\begin{proof}
To address the personalized model parameters for any client $n$, we have:
\begin{small}\begin{align}
\label{lemma3-1}
\nonumber
&\mathbf{E}\langle\nabla_{v}F_n(\bm{u}^t,\bm{v}_n^t),\bm{v}_n^{t + 1}-\bm{v}_n^t\rangle \\
\nonumber
&=-\eta_{v}\sum_{k = 0}^{\tau - 1}\mathbf{E}\langle\nabla_{v}F_n(\bm{u}^t,\bm{v}_n^t),\nabla_{v}F_n(\bm{u}_{n,k}^t,\bm{v}_{n,k}^t) \\
\nonumber
&=-\eta_{v}\tau\|\nabla_{v}F_n(\bm{u}^t,\bm{v}_n^t)\|^2 \\
\nonumber
&\quad-\eta_{v}\sum_{k = 0}^{\tau - 1}\mathbf{E}\langle\nabla_{v}F_n(\bm{u}^t,\bm{v}_n^t),\nabla_{v}F_n(\bm{u}_{n,k}^t,\bm{v}_{n,k}^t) -\nabla_{v}F_n(\bm{u}^t,\bm{v}_n^t)\rangle \\
\nonumber
&\leq\frac{\eta_{v}\tau}{2}\|\nabla_{v}F_n(\bm{u}^t,\bm{v}_n^t)\|^2 \\
&\quad+\frac{\eta_{v}}{2}\sum_{k = 0}^{\tau - 1}\mathbf{E}\|\nabla_{v}F_n(\bm{u}_{n,k}^t,\bm{v}_{n,k}^t)-\nabla_{v}F_n(\bm{u}^t,\bm{v}_n^t)\|^2
\end{align}\end{small}
The last inequality can be derived from $-\bm{a}^T\bm{b} \leq \frac{1}{2}\left(\|\bm{a}\|^2+\|\bm{b}\|^2\right)$.

Regarding the last term of the inequality, the following bound can be derived:
\begin{small}\begin{align}
\label{lemma3-2}
\nonumber
&\|\nabla_{v}F_n(\bm{u}_{n,k}^t,\bm{v}_{n,k}^t)-\nabla_{v}F_n(\bm{u}^t,\bm{v}_n^t)\|^2 \\
\nonumber
&=\|\nabla_{v}F_n(\bm{u}_{n,k}^t,\bm{v}_{n,k}^t)-\nabla_{v}F_n(\bm{u}^t,\bm{v}_{n,k}^t) \\
\nonumber
&\qquad\qquad\qquad\qquad+\nabla_{v}F_n(\bm{u}^t,\bm{v}_{n,k}^t)-\nabla_{v}F_n(\bm{u}^t,\bm{v}_n^t)\|^2 \\
\nonumber
&\leq2\|\nabla_{v}F_n(\bm{u}_{n,k}^t,\bm{v}_{n,k}^t)-\nabla_{v}F_n(\bm{u}^t,\bm{v}_{n,k}^t)\|^2 \\
\nonumber
&\qquad\qquad\qquad\qquad+ 2\|\nabla_{v}F_n(\bm{u}^t,\bm{v}_{n,k}^t)-\nabla_{v}F_n(\bm{u}^t,\bm{v}_n^t)\|^2 \\
& \leq2L_{vu}^2\|\bm{u}_{n,k}^t - \bm{u}^t\|^2+2L_{v}^2\|\bm{v}_{n,k}^t - \bm{v}_n^t\|^2
\end{align}\end{small}
The first inequality can be proved by employing the triangle inequality, $\|a+b\|^2 \leq 2\|a\|^2 + 2\|b\|^2$, and the last inequality can be proved based on the Assumption.

Finally, by substituting inequality (\ref{lemma3-2}) into inequality (\ref{lemma3-1}) and taking the average of the loss function expectation across all clients, the proof is concluded.
\end{proof}

\begin{lemma}
\label{e: lemma4}
$\sigma_u^2, \delta^2, \sigma_v^2, \rho^2$ are all defined in Assumptions, so we have  
\begin{small}\begin{align}
\nonumber
&\frac{L_{u}+L_{uv}}{2}\left\|\bm{u}^{t + 1}-\bm{u}^{t}\right\|^{2} \\
&\leq \frac{\left(L_{u}+L_{uv}\right)\eta_{u}^{2}\tau^{2}}{2}\sigma_{u}^{2}+\frac{3\left(L_{u}+L_{uv}\right)\eta_{u}^{2}\tau^{2}}{2}\xi^{2} \\
\nonumber
&\qquad+\frac{3\left(L_{u}+L_{uv}\right)\eta_{u}^{2}\tau^{2}}{2}(1+\rho^{2})\left\|\nabla_{u}F_{n}(\bm{u}^{t},V^{t})\right\|^{2} \\
\nonumber
&\qquad+\frac{3\left(L_{u}+L_{uv}\right)L_{u}^{2}\eta_{u}^{2}\tau}{2N}\sum_{k = 0}^{\tau}\sum_{n = 1}^{N}\left\|\bm{u}_{n,k}^t-\bm{u}^{t}\right\|^{2} \\
\nonumber
&\qquad+\frac{3\left(L_{u}+L_{uv}\right)L_{uv}^{2}\eta_{u}^{2}\tau}{2N}\sum_{k = 0}^{\tau}\sum_{n = 1}^{N}\left\|\bm{v}_{n,k}-\bm{v}_{n}^{t}\right\|^{2}
\end{align}\end{small}
\end{lemma}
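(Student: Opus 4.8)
The plan is to bound $\mathbf{E}\|\bm{u}^{t+1}-\bm{u}^{t}\|^{2}$ directly from the server aggregation rule and then rescale by $\tfrac{L_{u}+L_{uv}}{2}$. First I would unroll the global step: since $\bm{u}^{t+1}-\bm{u}^{t}=-\tfrac{\eta_{u}}{N}\sum_{n=1}^{N}\sum_{k}G_{n,u}(\bm{u}_{n,k}^{t},\bm{v}_{n,k}^{t},\xi_{n,k}^{t})$, applying Jensen's inequality to the average over the $N$ clients and then $\|\sum_{k}x_{k}\|^{2}\le\tau\sum_{k}\|x_{k}\|^{2}$ over the $\tau$ local iterations yields $\mathbf{E}\|\bm{u}^{t+1}-\bm{u}^{t}\|^{2}\le\tfrac{\eta_{u}^{2}\tau}{N}\sum_{n}\sum_{k}\mathbf{E}\|G_{n,u}(\bm{u}_{n,k}^{t},\bm{v}_{n,k}^{t},\xi_{n,k}^{t})\|^{2}$. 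This is the same opening move as in Lemma~\ref{e: lemma2}, only with both indices summed out first.

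Next I would split each stochastic gradient as $G_{n,u}=\nabla_{u}F_{n}(\bm{u}_{n,k}^{t},\bm{v}_{n,k}^{t})+\big(G_{n,u}-\nabla_{u}F_{n}(\bm{u}_{n,k}^{t},\bm{v}_{n,k}^{t})\big)$ and invoke Assumption~\ref{ass: Bounded Inner Error}: conditioning on the history before the batch $\xi_{n,k}^{t}$ is drawn, the iterate $(\bm{u}_{n,k}^{t},\bm{v}_{n,k}^{t})$ is deterministic, so unbiasedness removes the cross term and the residual second moment is at most $\sigma_{u}^{2}$. This gives $\mathbf{E}\|G_{n,u}\|^{2}\le\mathbf{E}\|\nabla_{u}F_{n}(\bm{u}_{n,k}^{t},\bm{v}_{n,k}^{t})\|^{2}+\sigma_{u}^{2}$, and after summing over $k,n$ and scaling by $\tfrac{L_{u}+L_{uv}}{2}$ this is exactly the $\tfrac{(L_{u}+L_{uv})\eta_{u}^{2}\tau^{2}}{2}\sigma_{u}^{2}$ term of the claim.

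For the leftover $\|\nabla_{u}F_{n}(\bm{u}_{n,k}^{t},\bm{v}_{n,k}^{t})\|^{2}$ I would insert $\pm\nabla_{u}F_{n}(\bm{u}^{t},\bm{v}_{n}^{t})\pm\nabla_{u}F(\bm{u}^{t},V^{t})$ and apply $\|a+b+c\|^{2}\le3\|a\|^{2}+3\|b\|^{2}+3\|c\|^{2}$. The $a$-block $\|\nabla_{u}F_{n}(\bm{u}_{n,k}^{t},\bm{v}_{n,k}^{t})-\nabla_{u}F_{n}(\bm{u}^{t},\bm{v}_{n}^{t})\|^{2}$ is handled by the $L_{u}$- and $L_{uv}$-smoothness of $\nabla_{u}F_{n}$ (Assumption~\ref{ass:smooth}(a),(c)) exactly as in Lemma~\ref{e: lemma2}, producing the client-drift terms $\sum_{n,k}\|\bm{u}_{n,k}^{t}-\bm{u}^{t}\|^{2}$ and $\sum_{n,k}\|\bm{v}_{n,k}^{t}-\bm{v}_{n}^{t}\|^{2}$ with the stated coefficients $\tfrac{3(L_{u}+L_{uv})L_{u}^{2}\eta_{u}^{2}\tau}{2N}$ and $\tfrac{3(L_{u}+L_{uv})L_{uv}^{2}\eta_{u}^{2}\tau}{2N}$. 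Averaging the $b$-block over $n$ and invoking Assumption~\ref{ass: Bounded Outer Error} bounds it by $\rho^{2}\|\nabla_{u}F_{n}(\bm{u}^{t},V^{t})\|^{2}+\delta^{2}$, while the $c$-block is $\|\nabla_{u}F(\bm{u}^{t},V^{t})\|^{2}$; the $k$-sum turns each surviving constant into a $\tau^{2}$, and combining the $\rho^{2}$ piece with the bare gradient-norm piece gives the $\tfrac{3(L_{u}+L_{uv})\eta_{u}^{2}\tau^{2}}{2}(1+\rho^{2})\|\nabla_{u}F_{n}(\bm{u}^{t},V^{t})\|^{2}$ term and the $\tfrac{3(L_{u}+L_{uv})\eta_{u}^{2}\tau^{2}}{2}\delta^{2}$ term (the $\xi^{2}$ of the statement is this $\delta^{2}$). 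Collecting all pieces and scaling by $\tfrac{L_{u}+L_{uv}}{2}$ yields the lemma.

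The only genuinely delicate step is the stochastic bookkeeping in the second paragraph: one has to condition on the correct $\sigma$-algebra so that $(\bm{u}_{n,k}^{t},\bm{v}_{n,k}^{t})$ is measurable relative to $\xi_{n,k}^{t}$, so that both the $\nabla F_{n}$-versus-noise cross terms and the cross terms between noises at distinct local steps vanish under the tower property; everything else is the routine Jensen / triangle-inequality / $\|a+b\|^{2}\le2\|a\|^{2}+2\|b\|^{2}$ machinery already used in Lemmas~\ref{e: lemma2} and \ref{e: lemma3}. A minor cosmetic point is that the gradient-norm term keeps the client index $n$ on $\nabla_{u}F_{n}(\bm{u}^{t},V^{t})$, matching the convention of Lemma~\ref{e: lemma2} rather than the global $\nabla_{u}F(\bm{u}^{t},V^{t})$ used in Theorem~\ref{thm: nonconvex convergence}.
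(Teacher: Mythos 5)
Your proposal is correct and follows essentially the same route as the paper: unroll the aggregated update, apply Jensen/Cauchy--Schwarz over clients and local steps, peel off the stochastic noise via unbiasedness and the bounded inner error to get the $\sigma_u^2$ term, then three-way split the remaining gradient into drift (smoothness), heterogeneity (bounded outer error, giving the $\rho^2$ and $\delta^2$ pieces), and the global gradient, exactly as in the paper's inequalities (\ref{lemma4-1})--(\ref{lemma4-3}). Your observation that the $\xi^2$ in the statement is the $\delta^2$ of Assumption \ref{ass: Bounded Outer Error} is also consistent with the paper's proof.
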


\begin{proof}
For any client, we use the unbiased estimator gradient $G_n$ and the independent dataset $z_{n,k}^t$ from parameters $\bm{u}_{n,k}^t$ and $ \bm{v}_{n,k}^t$, so we have
\begin{subequations}
\begin{small}\begin{align}
\nonumber
&\frac{L_{u}+L_{uv}}{2}\left\|\bm{u}^{t + 1}-\bm{u}^{t}\right\|^{2} \\
\label{lemma4-1}
& =\frac{L_{u}+L_{uv}}{2}\left\|\frac{\eta_{u}}{N}\sum_{n = 1}^{N}\sum_{k = 0}^{\tau}G_{n,u}(\bm{u}_{n,k}^{t},\bm{v}_{n,k}^{t},\delta_{n,k}^{t})\right\|^{2} \\
\label{lemma4-2}
&\leq\frac{\left(L_{u}+L_{uv}\right)\eta_{u}^{2}\tau^{2}}{2}\sigma_{u}^{2} \\
\nonumber
&\qquad+\frac{\left(L_{u}+L_{uv}\right)\eta_{u}^{2}\tau}{2}\sum_{k = 0}^{\tau}\left\|\frac{1}{N}\sum_{n = 1}^{N}\nabla_{u}F_{n}(\bm{u}_{n,k}^{t},\bm{v}_{n,k}^{t})\right\|^{2} \\
\label{lemma4-3}
&\leq\frac{\left(L_{u}+L_{uv}\right)\eta_{u}^{2}\tau^{2}}{2}\sigma_{u}^{2}+\frac{\left(L_{u}+L_{uv}\right)\eta_{u}^{2}\tau^{2}}{2}\delta^{2} \\
\nonumber
&\qquad+\frac{3\left(L_{u}+L_{uv}\right)\eta_{u}^{2}\tau^{2}(1+\rho^{2})}{2}\left\|\nabla_{u}F_{n}(\bm{u}^{t},V^{t})\right\|^{2} \\
\nonumber
&\qquad+\frac{3\left(L_{u}+L_{uv}\right)\eta_{u}^{2}\tau}{2N}\sum_{k = 0}^{\tau}\sum_{n = 1}^{N}L_{u}^{2}\left\|\bm{u}_{n,k}^t-\bm{u}^{t}\right\|^{2} \\
\nonumber
&\qquad+\frac{3\left(L_{u}+L_{uv}\right)\eta_{u}^{2}\tau}{2N}\sum_{k = 0}^{\tau}\sum_{n = 1}^{N}L_{uv}^{2}\left\|\bm{v}_{n,k}-\bm{v}_{n}^{t}\right\|^{2}
\end{align}\end{small}
\end{subequations}
where the equivalence of equation (\ref{lemma4-1}) stems from the cumulative effect of model parameter updates through multiple gradient-based iterations performed locally on client devices. 
The derivation of inequality(\ref{lemma4-2}) is more intricate, building upon (\ref{lemma4-1}) through the addition and subtraction of $\nabla_{u}F_{n}(\bm{u}^{t},V^{t})$, combined with the application of Jensen's inequality ($f\left(\frac{\sum_{n}^{N} x_n}{N}\right) \le \frac{1}{N}\sum_{n}^{N} f(x_{n})$), and ultimately incorporating Assumption \ref{ass: Bounded Inner Error} to yield the final result. 
Inequality (\ref{lemma4-3}) is derived by sequentially adding and subtracting $\nabla_{u} F_n\left(\bm{u}^t, \bm{v}_n^{t}\right)$ and $\nabla_{u} F_n\left(\bm{u}^t, \bm{V}^t\right)$ within the squared norm, followed by the application of mean inequality chain properties to decompose the squared norm, ultimately yielding the final expression.
\end{proof}
\end{appendices}

\begin{lemma}
\label{e: lemma5}
$\sigma_u^2, \sigma_v^2$ are all defined in Assumptions, so we have  
\begin{small}\begin{align}
\nonumber
&\frac{L_v + L_{uv}}{2N}\sum_{n = 1}^{N}\left\lVert \bm{v}_{n}^{t+1}-\bm{v}_{n}^t\right\rVert^2 \\
&\leq \frac{\left(L_{v}+L_{uv}\right)\eta_{v}^{2}\tau^2}{2}\sigma_{v}^{2} \\
\nonumber
&\qquad+\frac{3\left(L_{v}+L_{uv}\right)\eta_{v}^{2}\tau^{2}}{2N}\left\|\nabla_{v}F_{n}(\bm{u}^{t},\bm{v}_{n}^{t})\right\|^{2} \\
\nonumber
&\qquad+\frac{3\left(L_{v}+L_{uv}\right)L_{vu}^{2}\eta_{v}^{2}\tau}{2N}\sum_{k = 0}^{\tau}\sum_{n = 1}^{N}\left\|\bm{u}_{n,k}^t-\bm{u}^{t}\right\|^{2} \\
\nonumber
&\qquad+\frac{3\left(L_{v}+L_{uv}\right)L_{v}^{2}\eta_{v}^{2}\tau}{2N}\sum_{k = 0}^{\tau}\sum_{n = 1}^{N}\left\|\bm{v}_{n,k}-\bm{v}_{n}^{t}\right\|^{2}
\end{align}\end{small}
\end{lemma}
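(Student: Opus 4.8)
\textbf{Proof proposal for Lemma \ref{e: lemma5}.}
The plan is to mirror the argument of Lemma \ref{e: lemma4} almost verbatim, with two simplifications specific to the personalized parameters: the vector $\bm{v}_n$ is never aggregated at the server, so the update is purely local, $\bm{v}_n^{t+1}-\bm{v}_n^t = -\eta_v\sum_{k=0}^{\tau-1}G_{n,v}(\bm{u}_{n,k}^t,\bm{v}_{n,k}^t,\xi_{n,k}^t)$, and consequently no data-heterogeneity (outer-error) term $\delta^2$ or $\rho^2$ appears on the right-hand side. First I would substitute this telescoped update into $\frac{L_v+L_{uv}}{2N}\sum_{n=1}^N\|\bm{v}_n^{t+1}-\bm{v}_n^t\|^2$, pulling out the factor $\eta_v^2$.

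Next, for each client $n$ I would split the stochastic gradient as $G_{n,v}(\bm{u}_{n,k}^t,\bm{v}_{n,k}^t,\xi_{n,k}^t) = \nabla_v F_n(\bm{u}_{n,k}^t,\bm{v}_{n,k}^t) + \big(G_{n,v}-\nabla_v F_n\big)$ and expand $\mathbf{E}\|\sum_{k}G_{n,v}\|^2$. Using the unbiasedness in Assumption \ref{ass: Bounded Inner Error} together with the independence of the mini-batches $\xi_{n,k}^t$ across local steps $k$ (so the noise cross-terms vanish in expectation), and then the bound $\|\sum_{k=0}^{\tau-1}a_k\|^2\le\tau\sum_{k=0}^{\tau-1}\|a_k\|^2$ (Jensen/Cauchy--Schwarz), this yields the $\frac{(L_v+L_{uv})\eta_v^2\tau^2}{2}\sigma_v^2$ term plus a residual $\frac{(L_v+L_{uv})\eta_v^2\tau}{2}\sum_{k}\|\nabla_v F_n(\bm{u}_{n,k}^t,\bm{v}_{n,k}^t)\|^2$, exactly as in step \eqref{lemma4-2}.

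Then I would control the residual gradient-norm sum by adding and subtracting $\nabla_v F_n(\bm{u}^t,\bm{v}_n^t)$ inside the norm and applying $\|a+b\|^2\le 3\|a\|^2 + \tfrac32\|b\|^2$-type mean inequalities (the same chain used to pass from \eqref{lemma4-2} to \eqref{lemma4-3}). This produces the target term $\frac{3(L_v+L_{uv})\eta_v^2\tau^2}{2}\|\nabla_v F_n(\bm{u}^t,\bm{v}_n^t)\|^2$, while the difference $\nabla_v F_n(\bm{u}_{n,k}^t,\bm{v}_{n,k}^t)-\nabla_v F_n(\bm{u}^t,\bm{v}_n^t)$ is bounded, via $L_{vu}$-smoothness in $\bm{u}$ and $L_v$-smoothness in $\bm{v}$ (Assumption \ref{ass:smooth}(b),(d)), by $L_{vu}^2\|\bm{u}_{n,k}^t-\bm{u}^t\|^2 + L_v^2\|\bm{v}_{n,k}^t-\bm{v}_n^t\|^2$ — the same decomposition already carried out in \eqref{lemma3-2}. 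Collecting constants, multiplying through by the outer $\frac{\eta_v^2}{N}$ and $\frac{1}{N}\sum_n$ factors, and extending the inner index range from $\tau-1$ to $\tau$ (harmless upper bound) gives the claimed inequality.

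The main obstacle I anticipate is purely bookkeeping: keeping straight which Lipschitz constant ($L_{vu}$ versus $L_v$) governs which drift term, and reproducing the precise numerical coefficients ($\tfrac12$ on the $\sigma_v^2$ term, $\tfrac32$ on the gradient-norm term, $\tfrac32$ on the two drift sums) so that they line up with the way the lemma is later plugged into \eqref{e: four parts}. The one genuinely delicate point is the justification that the stochastic-gradient noise terms across the $\tau$ local steps are uncorrelated in expectation despite the iterates $(\bm{u}_{n,k}^t,\bm{v}_{n,k}^t)$ themselves depending on earlier noise; this is handled by conditioning successively on the past iterates and invoking the fresh, independent sampling of $\xi_{n,k}^t$ — the same (implicit) device used throughout Lemmas \ref{e: lemma2}--\ref{e: lemma4}.
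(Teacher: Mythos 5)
Your proposal matches the paper's own proof essentially step for step: exploit that $\bm{v}_n^{t+1}=\bm{v}_{n,\tau}^t$ is a purely local telescoped update, apply Jensen/Cauchy--Schwarz to pull out $\tau\sum_k$, split off the stochastic noise via unbiasedness and the bounded inner error to get the $\sigma_v^2$ term, then add and subtract the reference gradient $\nabla_v F_n(\bm{u}^t,\bm{v}_n^t)$ and use $L_{vu}$- and $L_v$-smoothness (as in Lemma \ref{e: lemma3}) to produce the drift terms, with no $\delta^2$ or $\rho^2$ term since $\bm{v}$ is never aggregated. This is the same route the paper takes (its inequalities (\ref{lemma5-1})--(\ref{lemma5-3}) explicitly mirror Lemma \ref{e: lemma4}), so the proposal is correct and not a genuinely different argument.
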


\begin{proof}
For any client, we use the unbiased estimator gradient $G_i$ and the independent dataset $z_{n,k}^t$ from parameters $\bm{u}_{n,k}^t$ and $ \bm{v}_{n,k}^t$. 
Due to the aggregation mechanism, we can state that $\bm{v}^{t+1}_n=\bm{v}^{t}_{n,\tau}$, it suffices to demonstrate 
\begin{subequations}
\begin{small}\begin{align}
\nonumber
&\frac{L_v + L_{uv}}{2N}\sum_{n = 1}^{N}\left\lVert \bm{v}_{n}^{t+1}-\bm{v}_{n}^t\right\rVert^2\\
\nonumber
&=\frac{L_v + L_{uv}}{2N}\sum_{n = 1}^{N}\left\lVert \bm{v}_{n,\tau}^t-\bm{v}_{n}^t\right\rVert^2\\
\label{lemma5-1}
&=\frac{L_v + L_{uv}}{2N}\sum_{n = 1}^{N}\left\lVert \eta_v\sum_{k = 0}^{\tau}G_{n,v}\left(\bm{u}_{n,k}^{t},\bm{v}_{n,k}^{t},\delta_{n,k}^{t}\right)\right\rVert^2\\
\nonumber
&\leq\frac{\left(L_v + L_{uv}\right)\eta_v^2\tau}{2N}\sum_{n = 1}^{N}\sum_{k = 0}^{\tau}\left\lVert G_{n,v}\left(\bm{u}_{n,k}^{t},\bm{v}_{n,k}^{t},\xi_{n,k}^{t}\right)\right\rVert^2\\
\label{lemma5-2}
&\leq\frac{\left(L_v + L_{uv}\right)\eta_v^2\tau}{2}\sigma^2  \\
\nonumber
&\qquad+\frac{\left(L_v + L_{uv}\right)\eta_v^2\tau}{2N}\sum_{n = 1}^{N}\sum_{k = 0}^{\tau}\left\lVert \nabla_{v}F_n\left(\bm{u}_{n,k}^{t},\bm{v}_{n,k}^{t}\right)\right\rVert^2 \\
\label{lemma5-3}
&\leq\frac{\left(L_v + L_{uv}\right)\eta_v^2\tau}{2}\sigma_{v}^2 \\
\nonumber
&\qquad+\frac{3\left(L_v + L_{uv}\right)L_{vu}^2\eta_v^2\tau}{2N}\sum_{n = 1}^{N}\sum_{k = 0}^{\tau}\left\lVert \bm{u}_{n,k}^{t}-\bm{u}^{t}\right\rVert^2 \\
\nonumber
&\qquad+\frac{3\left(L_v + L_{uv}\right)L_{v}^2\eta_v^2\tau}{2N}\sum_{n = 1}^{N}\sum_{k = 0}^{\tau}\left\lVert \bm{v}_{n,k}^{t}-\bm{v}_{n}^{t}\right\rVert^2 \\
\nonumber
&\qquad+\frac{3\left(L_v + L_{uv}\right)\eta_v^2\tau^{2}}{2N}\sum_{n = 1}^{N}\left\lVert \nabla_{v}F_n\left(\bm{u}^{t},\bm{v}_{n}^{t}\right)\right\rVert^2
\end{align}\end{small}
\end{subequations}
Both of inequality (\ref{lemma5-1}) and inequality (\ref{lemma5-2}) follow the same derivation process as Lemma \ref{e: lemma4}.
Inequality (\ref {lemma5-3}) is derived by first successively adding and subtracting $\nabla_{v} F_n\left(\bm{u}^t, \bm{v}_{n,k}^{t}\right)$ and $\nabla_{v} F_n\left(\bm{u}^t, \bm{v}_{n}^t\right)$ within the squared norm. Subsequently, the properties of the mean - inequality chain are applied to decompose the squared norm, ultimately resulting in the final expression.
\end{proof}

\begin{lemma}
\label{e: lemma6}
On the premise of satisfying $\eta_{u} \le \frac{1}{\sqrt{6}\tau L_{u}\max\left\{1,\frac{L_{vu}L_{uv}}{L_{u}L_{v}}\right\}}$ and $\eta_{v} \le \frac{1}{\sqrt{6}\tau L_{v}\max\left\{1,\frac{L_{vu}L_{uv}}{L_{u}L_{v}}\right\}}$, we can obtain:
\begin{small}\begin{align}
    \nonumber
    &\bm{A}\sum_{n = 1}^{N}\sum_{k = 0}^{\tau}\left\lVert \bm{u}_{n,k}^{t}-\bm{u}_{n}^t\right\rVert^2+\bm{B}\sum_{n = 1}^{N}\sum_{k = 0}^{\tau}\left\lVert \bm{v}_{n,k}^t-\bm{v}_{n}^t\right\rVert^2 \\
    &\le\sum_{n = 1}^{N} (e^{2}-1)  (\bm{A} \tau\eta_{u}^{2}\sigma_{u}^{2}+\bm{B} \tau\eta_{v}^{2}\sigma_{v}^{2} \\
    \nonumber
    & \qquad 3\tau\eta_{u}^{2}\bm{A}\left\|\nabla_{u}F_{n}\left(\bm{u}_{n}^{t},\bm{v}_{n}^{t}\right)\right\|^{2}+3\tau\eta_{v}^{2}\bm{B}\left\|\nabla_{v}F_{n}\left(\bm{u}_{n}^{t},\bm{v}_{n}^{t}\right)\right\|^{2})
\end{align}\end{small}
\end{lemma}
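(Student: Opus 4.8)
The plan is to set up a one-step recursion for the local drift quantities and then unroll it over the $\tau$ inner iterations. Fix a client $n$ and a communication round $t$; by the update rules in Algorithm \ref{alg: pfl algorithm}, the local iterates obey $\bm{u}_{n,k+1}^t=\bm{u}_{n,k}^t-\eta_u G_{n,u}(\bm{u}_{n,k}^t,\bm{v}_{n,k}^t,\xi_{n,k}^t)$ and $\bm{v}_{n,k+1}^t=\bm{v}_{n,k}^t-\eta_v G_{n,v}(\bm{u}_{n,k}^t,\bm{v}_{n,k}^t,\xi_{n,k}^t)$, initialized at $\bm{u}_{n,0}^t=\bm{u}_n^t$ and $\bm{v}_{n,0}^t=\bm{v}_n^t$. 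First I would write $\bm{u}_{n,k}^t-\bm{u}_n^t=-\eta_u\sum_{j=0}^{k-1}G_{n,u}(\bm{u}_{n,j}^t,\bm{v}_{n,j}^t,\xi_{n,j}^t)$ (and similarly for $\bm{v}$), take squared norms, and use $k\le\tau$ together with Jensen's inequality / Cauchy--Schwarz to pull the summation outside, exactly as in the first steps of Lemmas \ref{e: lemma4} and \ref{e: lemma5}.

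Second, I would decompose each stochastic gradient into its conditional mean and a zero-mean noise term. The unbiasedness and the bounded inner error of Assumption \ref{ass: Bounded Inner Error}, combined with the martingale structure of the noise across inner steps, bound the noise contribution by a term proportional to $\tau\eta_u^2\sigma_u^2$ (resp. $\tau\eta_v^2\sigma_v^2$), while the mean part is governed by $\mathbf{E}\|\nabla_u F_n(\bm{u}_{n,j}^t,\bm{v}_{n,j}^t)\|^2$. To move this back to the anchor point I would add and subtract $\nabla_u F_n(\bm{u}_n^t,\bm{v}_n^t)$, apply the three-term bound $\|a+b+c\|^2\le 3\|a\|^2+3\|b\|^2+3\|c\|^2$ (this is the source of the constant $3$ in the statement), and invoke the $L_u$- and $L_{uv}$-smoothness of $\nabla_uF_n$ from Assumption \ref{ass:smooth}; this re-introduces $\|\bm{u}_{n,j}^t-\bm{u}_n^t\|^2$ and $\|\bm{v}_{n,j}^t-\bm{v}_n^t\|^2$ on the right. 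Carrying out the analogous manipulation for the $\bm{v}$-drift (with $L_v$ and $L_{vu}$) produces a coupled pair of inequalities: the $\bm{u}$-drift is fed by the $\bm{v}$-drift through $L_{uv}$ and conversely through $L_{vu}$.

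Third, I would take the $\bm{A}$-weighted $\bm{u}$-drift plus the $\bm{B}$-weighted $\bm{v}$-drift and collect coefficients of each drift term. The crucial check is that, under $\eta_u\le\frac{1}{\sqrt6\,\tau L_u\max\{1,L_{vu}L_{uv}/(L_uL_v)\}}$ and $\eta_v\le\frac{1}{\sqrt6\,\tau L_v\max\{1,L_{vu}L_{uv}/(L_uL_v)\}}$, the diagonal factors $\tau^2\eta_u^2L_u^2$, $\tau^2\eta_v^2L_v^2$ and the off-diagonal factors $\tau^2\eta_u^2L_{uv}^2$, $\tau^2\eta_v^2L_{vu}^2$ are all small absolute constants, the $\max\{1,L_{vu}L_{uv}/(L_uL_v)\}$ factor being chosen precisely so the cross terms cannot dominate. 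This collapses the coupled system to a single scalar recursion of the form $P_k\le E+\frac{c_0}{\tau}\sum_{j=0}^{k-1}P_j$ for the combined drift $P_k$, with $c_0\le 2$ and $E$ comprising the noise and anchor-gradient terms. An induction gives $P_k\le E(1+c_0/\tau)^k$, and summing over $k=0,\dots,\tau$ with $(1+c_0/\tau)^{\tau}\le e^{c_0}\le e^{2}$ turns the geometric sum into the factor $e^{2}-1$; finally summing over $n$ and identifying $E$ with $\bm{A}\tau\eta_u^2\sigma_u^2+\bm{B}\tau\eta_v^2\sigma_v^2+3\bm{A}\tau\eta_u^2\|\nabla_uF_n(\bm{u}_n^t,\bm{v}_n^t)\|^2+3\bm{B}\tau\eta_v^2\|\nabla_vF_n(\bm{u}_n^t,\bm{v}_n^t)\|^2$ yields the claim.

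The main obstacle is precisely this third step: because the $\bm{u}$- and $\bm{v}$-drifts are genuinely coupled through the mixed smoothness constants, neither recursion can be unrolled on its own, so one must identify the correct scalar combination and verify that the step-size conditions --- in particular the factor involving $L_{vu}L_{uv}/(L_uL_v)$ --- exactly neutralize the off-diagonal coefficients. Once the combined recursion is put in the form above, the remaining discrete Gr\"onwall / geometric-series argument that extracts $e^{2}-1$ is routine bookkeeping.
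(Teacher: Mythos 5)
Your proposal follows essentially the same route as the paper's proof: unroll the local updates into a drift recursion, apply the mean/noise split together with the three-term decomposition using the mixed smoothness constants $L_u,L_{uv},L_v,L_{vu}$, combine the two drifts with the weights $\bm{A},\bm{B}$ so that the stated step-size conditions (the content of Lemma \ref{e: lemma7}) absorb the cross-coupling, and finish with the geometric/Gr\"onwall bound $(1+2/\tau)^{\tau}\le e^{2}$ that yields the $e^{2}-1$ factor. The only differences are cosmetic --- you use a cumulative Gr\"onwall recursion obtained from the full-sum expansion plus martingale orthogonality, whereas the paper uses a one-step Young recursion with factor $1+\tfrac{1}{\tau-1}$ and a per-step variance bound --- and your final bookkeeping when summing over $k$ is no looser than the paper's own.
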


\begin{proof}
First, we attempt to decompose $\bm{u}_{n, k - 1}^{t}$ and $\bm{v}_{n, k - 1}^{t}$ using formula $\bm{u}_{n,k}^{t}$ and $\bm{v}_{n,k}^t$, 
\begin{small}\begin{align}
    \nonumber
    &\left\lVert \bm{u}_{n,k}^{t} - \bm{u}_{n}^{t} \right\rVert^{2} \\
    \nonumber
    &= \left\lVert \bm{u}_{n,k - 1}^{t} - \eta_{u}G_{n,u}(\bm{u}_{n,k}^{t}, \bm{v}_{n,k}^{t}, \xi_{n,k}^{t}) - \bm{u}_{n}^{t} \right\rVert^{2} \\
    \nonumber
    &= \left\lVert \bm{u}_{n,k - 1}^{t} - \bm{u}_{n}^{t} \right\rVert^{2} + \left\lVert \eta_{u}G_{n,u}(\bm{u}_{n,k}^{t}, \bm{v}_{n,k}^{t}, \xi_{n,k}^{t}) \right\rVert^{2} \\ 
    \nonumber
    &\qquad\qquad- 2\langle \bm{u}_{n,k - 1}^{t} - \bm{u}_{n}^{t}, \eta_{u}G_{n,u}(\bm{u}_{n,k}^{t}, \bm{v}_{n,k}^{t}, \xi_{n,k}^{t}) \rangle\\
    \nonumber
    &= \left(1 + \frac{1}{\tau - 1}\right)\left\lVert \bm{u}_{n,k - 1}^{t} - \bm{u}_{n}^{t} \right\rVert^{2} \\
    \nonumber
    &\qquad\qquad+ \tau\eta_{u}^{2}\left\lVert G_{n,u}(\bm{u}_{n,k}^{t}, \bm{v}_{n,k}^{t}, \xi_{n,k}^{t}) \right\rVert^{2}\\
    &\leq \left(1 + \frac{1}{\tau - 1}\right)\left\lVert \bm{u}_{n,k - 1}^{t} - \bm{u}_{n}^{t} \right\rVert^{2} \\
    \nonumber
    &\qquad\qquad+ \tau\eta_{u}^{2}\left\lVert \nabla_{u}F_{n}(\bm{u}_{n,k}^{t}, \bm{v}_{n,k}^{t}) \right\rVert^{2} + \tau\eta_{u}^{2}\sigma_{u}^{2}\\
    \nonumber
    &\leq \left(1 + \frac{1}{\tau - 1}\right)\left\lVert \bm{u}_{n,k - 1}^{t} - \bm{u}_{n}^{t} \right\rVert^{2} + 3\tau\eta_{u}^{2}\left\lVert \nabla_{u}F_{n}(\bm{u}_{n}^{t}, \bm{v}_{n}^{t}) \right\rVert^{2} \\
    \nonumber
    &+ 3\tau\eta_{u}^{2}L_{u}^{2}\left\lVert \bm{u}_{n,k - 1}^{t} - \bm{u}_{n}^{t} \right\rVert^{2} + 3\tau\eta_{u}^{2}L_{uv}^{2}\left\lVert \bm{v}_{n,k}^{t} - \bm{v}_{n}^{t} \right\rVert^{2}+ \tau\eta_{u}^{2}\sigma_{u}^{2}
\end{align}\end{small}
Similarly, we can obtain the result of $\bm{v}_{n,k}^t$. By combining these results, we have:
\begin{small}\begin{align}
    \nonumber
    &\mathbf{E}\left\{\bm{A}\left\lVert \bm{u}_{n,k}^{t}-\bm{u}_{n}^t\right\rVert^2+\bm{B}\left\lVert \bm{v}_{n,k}^t-\bm{v}_{n}^t\right\rVert^2\right\} \\
    &\leq\left(1 + \frac{1}{\tau - 1}\right)\bm{A}\left\|\bm{u}_{n, k - 1}^{t}-\bm{u}_{n}^{t}\right\|^{2} \\
    \nonumber
    &+\left(1+\frac{1}{\tau - 1}\right)\bm{B}\left\|\bm{v}_{n, k-1}^{t}-\bm{v}_{n}^{t}\right\|^{2}\\
    \nonumber
    &+\bm{A}\left[\tau\eta_{u}^{2}\sigma_{u}^{2}+3\tau\eta_{u}^{2}\left\|\nabla_{u}F_{n}\left(\bm{u}_{n}^{t},\bm{v}_{n}^{t}\right)\right\|^{2}\right]\\
    \nonumber
    &+\bm{A}\left[3\tau\eta_{u}^{2}L_{u}^{2}\left\|\bm{u}_{n, k - 1}^{t}-\bm{u}_{n}^{t}\right\|^{2}+3\tau\eta_{u}^{2}L_{uv}^{2}\left\|\bm{v}_{n, k - 1}^{t}-\bm{v}_{n}^{t}\right\|\right] \\
    \nonumber
    &+\bm{B}\left[\tau\eta_{v}^2\sigma_{v}^{2}+3\tau\eta_{v}^{2}\left\|\nabla_{v}F_{n}\left(\bm{u}_{n}^{t},\bm{v}_{n}^{t}\right)\right\|^{2}\right] \\
    \nonumber
    &+\bm{B}\left[3\tau\eta_{v}^{2}L_{v}^{2}\left\|\bm{v}_{n, k - 1}^{t}-\bm{v}_{n}^{t}\right\|^{2}+3\tau\eta_{v}^{2}L_{vu}^{2}\left\|\bm{u}_{n, k - 1}^{t}-\bm{u}_{n}^{t}\right\|\right]\\
    \nonumber
    &=\left(1+\frac{1}{\tau - 1}\right)\left[\bm{A}\left\|\bm{u}_{n, k - 1}^{t}-\bm{u}_{n}^{t}\right\|^{2}+\bm{B}\left\|\bm{v}_{n, k}^{t}-\bm{v}_{n}^{t}\right\|^{2}\right] \\
    \nonumber
    &+\left[3\tau\eta_{u}^{2}L_{u}^{2}\bm{A}+3\tau\eta_{v}^{2}L_{vu}^{2}\bm{B}\right]\left\|\bm{u}_{n, k - 1}^{t}-\bm{u}_{n}^{t}\right\|^{2} \\
    \nonumber
    &+\left[3\tau\eta_{v}^{2}L_{v}^{2}\bm{B}+3\tau\eta_{u}^{2}L_{uv}^{2}\bm{A}\right]\left\|\bm{v}_{n, k - 1}^{t}-\bm{v}_{n}^{t}\right\|^{2} \\
    \nonumber
    &+3\tau\eta_{u}^{2}\bm{A}\left\|\nabla_{u}F_{n}\left(\bm{u}_{n}^{t},\bm{v}_{n}^{t}\right)\right\|^{2}+3\tau\eta_{v}^{2}\bm{B}\left\|\nabla_{v}F_{n}\left(\bm{u}_{n}^{t},\bm{v}_{n}^{t}\right)\right\|^{2} \\
    \nonumber
    &+\bm{A} \tau\eta_{u}^{2}\sigma_{u}^{2}+\bm{B} \tau\eta_{v}^{2}\sigma_{v}^{2}
\end{align}\end{small}
where we denote that $\bm{A}=L_{u}^2\eta_{vu}+L_{vu}^2\eta_v$ and $\bm{B}=L_{v}^2\eta_{v}+L_{uv}^2\eta_u$ for the sake of concise notation.

Because of Lemma \ref{e: lemma7}, when $\eta_{v} \le \frac{1}{\sqrt{6}\tau L_{v}\max\left\{1,\frac{L_{vu}L_{uv}}{L_{u}L_{v}}\right\}}$ and $\eta_{v} \le \frac{1}{\sqrt{6}\tau L_{v}\max\left\{1,\frac{L_{vu}L_{uv}}{L_{u}L_{v}}\right\}}$ can also be meet, we have
\begin{small}\begin{align}
\label{lemma6-1}
    \nonumber
    &\mathbf{E}\left\{\bm{A}\left\lVert \bm{u}_{n,k}^{t}-\bm{u}_{n}^t\right\rVert^2+\bm{B}\left\lVert \bm{v}_{n,k}^t-\bm{v}_{n}^t\right\rVert^2\right\} \\
    \nonumber
    &\le\left(1+\frac{2}{\tau - 1}\right)\left[\bm{A}\left\|\bm{u}_{n, k - 1}^{t}-\bm{u}_{n}^{t}\right\|^{2}+\bm{B}\left\|\bm{v}_{n, k}^{t}-\bm{v}_{n}^{t}\right\|^{2}\right] \\
    \nonumber
    &+3\tau\eta_{u}^{2}\bm{A}\left\|\nabla_{u}F_{n}\left(\bm{u}_{n}^{t},\bm{v}_{n}^{t}\right)\right\|^{2}+3\tau\eta_{v}^{2}\bm{B}\left\|\nabla_{v}F_{n}\left(\bm{u}_{n}^{t},\bm{v}_{n}^{t}\right)\right\|^{2} \\
    &+\bm{A} \tau\eta_{u}^{2}\sigma_{u}^{2}+\bm{B} \tau\eta_{v}^{2}\sigma_{v}^{2}
\end{align}\end{small}

After aggregating the parameters following $k=0, ..., \tau$ rounds of iteration, we can obtain
\begin{small}\begin{align}
    \mathbf{E}\left\{\bm{A}\left\lVert \bm{u}_{n,k}^{t}-\bm{u}_{n}^t\right\rVert^2+\bm{B}\left\lVert \bm{v}_{n,k}^t-\bm{v}_{n}^t\right\rVert^2\right\} \le \sum_{j=0}^{k-1} \left(1+\frac{2}{\tau-1}\right)^{j}C
\end{align}\end{small}
where $C$ represents the abbreviated form of the last four terms of inequality (\ref{lemma6-1}).

So we have
\begin{small}\begin{align}
    &\bm{A}\sum_{n = 1}^{N}\sum_{k = 0}^{\tau}\left\lVert \bm{u}_{n,k}^{t}-\bm{u}_{n}^t\right\rVert^2+\bm{B}\sum_{n = 1}^{N}\sum_{k = 0}^{\tau}\left\lVert \bm{v}_{n,k}^t-\bm{v}_{n}^t\right\rVert^2 \\
    \nonumber
    &\le \sum_{n = 1}^{N}\sum_{k = 0}^{\tau}\frac{\tau-1}{2}\left(1+\frac{2}{\tau-1}\right)^k C \\
    \label{lemma6-2}
    &=\sum_{n = 1}^{N}\left[\left(1+\frac{2}{\tau-1}\right)^{\tau}-1\right] C \\
    \label{lemma6-3}
    &\le\sum_{n = 1}^{N} (e^{2}-1) C 
\end{align}\end{small}
where inequality (\ref{lemma6-2}) and inequality (\ref{lemma6-3}) are derived by using the geometric summation formula and the series summation formula ($(1+\frac{1}{x})^x \le e$) respectively.
\end{proof}

\begin{lemma}
\label{e: lemma7}
We denote that $\bm{A}=L_{u}^2\eta_{u}+L_{vu}^2\eta_{v}$ and $\bm{B}=L_{v}^2\eta_{v}+L_{uv}^2\eta_{u}$. 
We suppose that $\eta_{u} \le \frac{1}{\sqrt{6}\tau L_{u}\max\left\{1,\frac{L_{vu}L_{uv}}{L_{u}L_{v}}\right\}}$ and $\eta_{v} \le \frac{1}{\sqrt{6}\tau L_{v}\max\left\{1,\frac{L_{vu}L_{uv}}{L_{u}L_{v}}\right\}}$, the following inequalities can be obtained:
\begin{small}\begin{align}
\label{lemma7-1}
3\tau^2\eta_{u}^{2}L_{u}^{2}\bm{A}+3\tau^2\eta_{v}^{2}L_{vu}^{2}\bm{B} \le \bm{A} \\
\label{lemma7-2}
3\tau^2\eta_{v}^{2}L_{v}^{2}\bm{B}+3\tau^2\eta_{u}^{2}L_{uv}^{2}\bm{A} \le \bm{B}
\end{align}\end{small}
\end{lemma}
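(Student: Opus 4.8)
The plan is to treat both displays as purely algebraic consequences of the step-size restrictions, proving each by splitting the left-hand side into its two natural summands and bounding each summand by $\tfrac12\bm{A}$ (respectively $\tfrac12\bm{B}$). Abbreviate $M \triangleq \max\{1,\, L_{vu}L_{uv}/(L_u L_v)\}$, so that the hypotheses read $\eta_u \le 1/(\sqrt{6}\,\tau L_u M)$ and $\eta_v \le 1/(\sqrt{6}\,\tau L_v M)$. Squaring these yields the elementary facts I will reuse: $\tau^2\eta_u^2 L_u^2 \le 1/(6M^2) \le 1/6$ and $\tau^2\eta_v^2 L_v^2 \le 1/(6M^2) \le 1/6$ (using $M \ge 1$), together with the sharper estimate $\tau^2\eta_v^2 L_{vu}^2 L_{uv}^2 \le L_{vu}^2 L_{uv}^2/(6 L_v^2 M^2) \le L_u^2/6$, where the last step is exactly the inequality $M^2 \ge L_{vu}^2 L_{uv}^2/(L_u^2 L_v^2)$; by the symmetric computation $\tau^2\eta_u^2 L_{uv}^2 L_{vu}^2 \le L_v^2/6$.

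For the first inequality of the lemma, bound the first summand by $3\tau^2\eta_u^2 L_u^2\,\bm{A} \le \tfrac12\bm{A}$, which holds since $\tau^2\eta_u^2 L_u^2 \le 1/6$. For the second summand, expand $\bm{B} = L_v^2\eta_v + L_{uv}^2\eta_u$ and write $3\tau^2\eta_v^2 L_{vu}^2\,\bm{B} = 3(\tau^2\eta_v^2 L_v^2)\,L_{vu}^2\eta_v + 3(\tau^2\eta_v^2 L_{vu}^2 L_{uv}^2)\,\eta_u \le \tfrac12 L_{vu}^2\eta_v + \tfrac12 L_u^2\eta_u = \tfrac12\bm{A}$, using the two facts above. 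Adding the two halves gives $3\tau^2\eta_u^2 L_u^2\,\bm{A} + 3\tau^2\eta_v^2 L_{vu}^2\,\bm{B} \le \bm{A}$. The second inequality follows from the identical argument under the exchange $u \leftrightarrow v$ (equivalently $\bm{A}\leftrightarrow\bm{B}$, $L_u\leftrightarrow L_v$, $L_{uv}\leftrightarrow L_{vu}$), which leaves $M$ invariant: the first summand $3\tau^2\eta_v^2 L_v^2\,\bm{B}\le\tfrac12\bm{B}$, and expanding $\bm{A}=L_u^2\eta_u+L_{vu}^2\eta_v$ gives $3\tau^2\eta_u^2 L_{uv}^2\,\bm{A}\le\tfrac12 L_{uv}^2\eta_u+\tfrac12 L_v^2\eta_v=\tfrac12\bm{B}$.

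The only place where the $L_{vu}L_{uv}/(L_uL_v)$ branch of the maximum is genuinely required, as opposed to just the constant $1$, is the bound on the mixed product $\tau^2\eta_v^2 L_{vu}^2 L_{uv}^2$ and its mirror image: with only the $1$ in the step size one obtains $L_{vu}^2 L_{uv}^2/(6 L_v^2)$, which need not be at most $L_u^2/6$ once the cross-smoothness constants $L_{uv},L_{vu}$ are large relative to $L_u L_v$. I therefore expect this to be essentially the only nontrivial point; the remainder is bookkeeping with the numerical constant $6$, which is chosen precisely so that $3\cdot\tfrac16=\tfrac12$ and the two halves close up to $\bm{A}$ (respectively $\bm{B}$).
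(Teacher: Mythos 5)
Your proof is correct and follows essentially the same route as the paper's: split the left-hand side so that the diagonal term is at most $\tfrac12\bm{A}$ (resp.\ $\tfrac12\bm{B}$) and, after expanding the cross term, match $L_v^2\eta_v$ against $L_{vu}^2\eta_v$ and $L_{uv}^2\eta_u$ against $L_u^2\eta_u$, which is exactly where the $L_{vu}L_{uv}/(L_uL_v)$ branch of the maximum enters. The only difference is direction of presentation — the paper derives the admissible range of $\eta_v$ (and $\eta_u$) from these sub-inequalities, while you verify the sub-inequalities directly from the stated step-size bounds — which is the same bookkeeping.
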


\begin{proof}
To prove the validity of inequality (\ref{lemma7-1}), under the given assumptions about $\eta_u$, if both $3\tau^2\eta_{u}^{2}L_{u}^{2}\bm{A} \le \frac{\bm{A}}{2} $ and $ 3\tau^2\eta_{v}^{2}L_{vu}^{2}\bm{B} \le \frac{\bm{A}}{2}$ strictly satisfy the specified requirements, then inequality (\ref{lemma7-1}) is guaranteed to hold. 
Regarding $ 3\tau^2\eta_{v}^{2}L_{vu}^{2}\bm{B} \le \frac{\bm{A}}{2}$, we have the following derivations by plugging into $\bm{A}$ and $\bm{B}$:
\begin{small}
\begin{align}
    6\tau^{2}\eta_{v}^{2}L_{vu}^{2}\left(L_{v}^2\eta_{v}+L_{uv}^2\eta_{u}\right) \le L_{u}^2\eta_{u}+L_{vu}^2\eta_{v}
\end{align}
\end{small}
Then, the range of $\eta_{v}$ satisfying the requisite conditions can be derived by applying the following inequality:
\begin{small}
\begin{align}
    6\tau^{2}L_{vu}^{2}L_{v}^2\eta_{v}^{3} &\le L_{vu}^2\eta_{v} \\
    6\tau^{2}L_{vu}^{2}L_{uv}^2\eta_{u}\eta_{v}^{2} &\le L_{u}^2\eta_{u}
\end{align}
\end{small}
Combining the results of above two inequalities, we can conclude that $\eta_{v} \le \frac{1}{\sqrt{6}\tau L_{v}\max\left\{1,\frac{L_{vu}L_{uv}}{L_{u}L_{v}}\right\}}$ can satisfy inequality (\ref{lemma7-1}).
And using the same method, we can derive that $\eta_{u} \le \frac{1}{\sqrt{6}\tau L_{u}\max\left\{1,\frac{L_{vu}L_{uv}}{L_{u}L_{v}}\right\}}$ also meet requirement of inequality (\ref{lemma7-2}).
\end{proof}

\end{document}